\def\confversion{0}
\def\ifconf{\ifnum\confversion=1}
\def\ifnotconf{\ifnum\confversion=0}
\def\showauthornotes{1}
\def\showkeys{0}
\def\showdraftbox{0}
\def\confversion{0}
\def\widemargin{1}
\def\ipadcompile{0}
\definecolor{darkred}{rgb}{0.5,0,0}
\definecolor{darkgreen}{rgb}{0,0.35,0}
\definecolor{darkblue}{rgb}{0,0,0.55}
\DeclareSymbolFont{sfitgreek}{LGR}{cmss}{m}{it}
\DeclareMathSymbol{\sfpi}{\mathord}{sfitgreek}{`p}
\DeclareMathAlphabet{\mathpazocal}{OMS}{zplm}{m}{n}
\DeclareRobustCommand*{\mathcal}[1]{\mathpazocal{#1}}
\newcommand{\Authornote}[3]{{\sf\small\color{#3}{[#1: #2]}}}
\newcommand{\Authorcomment}[2]{{\sf \small\color{gray}{[#1: #2]}}}
\newcommand{\Authorfnote}[2]{\footnote{\color{red}{#1: #2}}}
\newcommand{\Authornote}[3]{}
\newcommand{\Authorcomment}[2]{}
\newcommand{\Authorfnote}[2]{}
\newcommand{\draftbox}{\begin{center}
  \fbox{%
    \begin{minipage}{2in}%
      \begin{center}%
        \begin{Large}%
          \textsc{Working Draft}%
        \end{Large}\\
        Please do not distribute%
      \end{center}%
    \end{minipage}%
  }%
\end{center}
\vspace{0.2cm}}
\newcommand{\draftbox}{}
\declaretheorem[numberwithin=section]{theorem}
\declaretheorem[sibling=theorem]{lemma}
\declaretheorem[sibling=theorem]{claim}
\declaretheorem[sibling=theorem]{fact}
\declaretheorem[sibling=theorem]{corollary}
\theoremstyle{definition}
\declaretheorem[sibling=theorem]{definition}
\declaretheorem[sibling=theorem]{remark}
\declaretheorem[sibling=theorem]{observation}
\newtheorem{algo}[theorem]{Algorithm}
\def\FullBox{\hbox{\vrule width 6pt height 6pt depth 0pt}}
\def\qed{\ifmmode\qquad\FullBox\else{\unskip\nobreak\hfil
\penalty50\hskip1em\null\nobreak\hfil\FullBox
\parfillskip=0pt\finalhyphendemerits=0\endgraf}\fi}
\def\qedsketch{\ifmmode\Box\else{\unskip\nobreak\hfil
\penalty50\hskip1em\null\nobreak\hfil$\Box$
\parfillskip=0pt\finalhyphendemerits=0\endgraf}\fi}
\def\matr#1{\mathsf{#1}}
\newcommand{\HS}{\mathrm{HS}}
\newcommand{\CS}{Cauchy--Schwarz}
\newcommand{\Ut}{\mathrm{U}^2}
\newcommand{\Mt}{\mathrm{M}_t}
\newcommand{\Utt}{\U_t}
\newcommand{\reg}{\mathrm{reg}}
\newcommand{\triv}{\mathrm{triv}}
\newcommand{\ep}{\varepsilon}
\newcommand{\stackalign}[1]{
	\vcenter{
		\Let@ \restore@math@cr \default@tag
		\baselineskip\fontdimen10 \scriptfont\tw@
		\advance\baselineskip\fontdimen12 \scriptfont\tw@
		\lineskip\thr@@\fontdimen8 \scriptfont\thr@@
		\lineskiplimit\lineskip
		\ialign{\hfil$\m@th\scriptstyle##$&$\m@th\scriptstyle{}##$\crcr
			#1\crcr
		}
	}
}
\let\latexcirc=\circ
\newcommand{\ccirc}{\mathbin{\mathchoice
  {\xcirc\scriptstyle}
  {\xcirc\scriptstyle}
  {\xcirc\scriptscriptstyle}
  {\xcirc\scriptscriptstyle}
}}
\newcommand{\xcirc}[1]{\vcenter{\hbox{$#1\latexcirc$}}}\let\circ\ccirc
\def\to{\rightarrow}
\def\eps{\varexpan}
\def\epsilon{\varepsilon}
\def\eps{\epsilon}
\def\phi{\varphi}
\newcommand{\ol}{\overline}
\newcommand{\ie}{i.e.,\xspace}
\newcommand{\mper}{\,.}
\newcommand{\E}{{\mathbb E}}
\newcommand{\C}{{\mathbb C}}
\newcommand{\Z}{{\mathbb Z}}
\newcommand{\U}{{\mathbb U}}
\DeclarePairedDelimiter\parens{\lparen}{\rparen}
\DeclarePairedDelimiter\abs{\lvert}{\rvert}
\DeclarePairedDelimiter\norm{\lVert}{\rVert}
\DeclarePairedDelimiter\braces{\lbrace}{\rbrace}
\DeclarePairedDelimiter\brackets{\lbrack}{\rbrack}
\DeclarePairedDelimiter\angles{\langle}{\rangle}
\DeclarePairedDelimiterXPP\lnorm[1]{}\lVert\rVert{_2}{#1}
\DeclareMathDelimiter{\given}
      {\mathbin}{symbols}{"6A}{largesymbols}{"0C}
\newcommand{\prob}{\mathsf{Pr}}
\newcommand{\Esymb}{\mathbb{E}}
\newcommand{\Psymb}{\mathrm{Pr}}
\DeclarePairedDelimiterXPP{\Prob}[1]
 {\prob}{\lparen}{\rparen}{}
 {\renewcommand{\given}{\;\delimsize\vert\nonscript\;\mathopen{}}#1}
\def\Pr#1{%
    \ProbabilityRender{\Psymb}{#1}%
}
\def\Ex#1{%
    \ProbabilityRender{\Esymb}{#1}%
}
\def\condPE#1#2{%
	\@ifnextchar\bgroup
	{\ConditionalProbabilityRender{\widetilde{\Esymb}}{#1}{#2}}
	{\ProbabilityRender{\widetilde{\Esymb}}{#1 \given #2}}
}
\def\ConditionalProbabilityRender#1#2#3#4{
	\renderwithdist{#1}{#2}{#3 \given #4}	
}
\def\ProbabilityRender#1#2{
  \@ifnextchar\bgroup%
  {\renderwithdist{#1}{#2}}
   {\singlervrender{#1}{#2}}
}
\def\singlervrender#1#2{%
   {\mathchoice
       {{#1}\brackets*{#2}}
       {{#1}[ #2 ]}
       {{#1}[ #2 ]}
       {{#1}[ #2 ]}
   }
}
\def\renderwithdist#1#2#3{%
   \@ifnextchar\bgroup
   {\superfancyrender{#1}{#2}{#3}}
   {\mathchoice
      {\underset{#2}{#1}\brackets*{#3}}
      {{#1}_{#2}[ #3 ]}
      {{#1}_{#2}[ #3 ]}
      {{#1}_{#2}[ #3 ]}
     }
}
\def\superfancyrender#1#2#3#4#5{
   \ensuremath{\mathchoice
      {\underset{#1}{{#1}}\left#4 #3 \right#5}
      {{#1}_{#2}#4 #3 #5}
      {{#1}_{#2}#4 #3 #5}
      {{#1}_{#2}#4 #3 #5}
   }
}
 \newcommand\SetSymbol[1][]{%
     \nonscript\:#1\vert
     \allowbreak
     \nonscript\:
     \mathopen{}}
  \DeclarePairedDelimiterX\Set[1]\{\}{%
     \renewcommand\given{\SetSymbol[\delimsize]}
     #1
}
\newcommand{\ip}[2]{\angles*{#1 , #2}}
 \newcommand{\set}[1]{\braces*{#1}}
\newcommand{\opnorm}[1]{\norm*{#1}_{\mathrm{op}}}
\newcommand{\hsnorm}[1]{\norm*{#1}_{\mathrm{HS}}}
\newcommand{\poly}{{\mathrm{poly}}}
\newcommand{\tr}{\mathrm{tr}}
\DeclareMathOperator{\Cay}{Cay}
\DeclareMathOperator{\Tr}{Tr}
\definecolor{cadmiumgreen}{rgb}{0.0, 0.42, 0.24}
 \newcommand{\tnote}[1]{\textcolor{magenta}{\textbf{Tushant:} #1 }}  
\begin{document}

\title{Derandomized Non-Abelian Homomorphism Testing in Low Soundness Regime}

 \author{
        Tushant Mittal\thanks{{\tt University of Chicago}. {\tt tushant@uchicago.edu}. Supported by NSF grant CCF-2326685.} \and
        Sourya Roy\thanks{{\tt The University of Iowa}. {\tt sourya-roy@uiowa.edu}.}    \and
}

\date{\today}

\newcommand{\projtriv}[1]{P^{{#1}\otimes{#1}^*}_{\mathrm{triv}}}
\date{}

\maketitle
\draftbox
We give a randomness-efficient homomorphism test in the low soundness regime for functions, $f: G\to \Utt$, from an arbitrary finite group $G$ to $t\times t$ unitary matrices. We show that if such a function passes a derandomized Blum--Luby--Rubinfeld (BLR) test (using small-bias sets), then (i) it correlates with a function arising from a genuine homomorphism, and (ii) it has a non-trivial Fourier mass on a low-dimensional irreducible representation. 
 
 { In the full randomness regime, such a test for matrix-valued functions on finite groups implicitly appears in the works of Gowers and Hatami~[{Sbornik: Mathematics} '17], and Moore and Russell~[{SIAM Journal on Discrete Mathematics} '15]. Thus, our work can be seen as a near-optimal derandomization of their results.} Our key technical contribution is a \enquote{degree-2 expander mixing lemma} that shows that Gowers' $\Ut$ norm can be efficiently estimated by restricting it to a small-bias subset. Another corollary is a \enquote{derandomized} version of a useful lemma due to Babai, Nikolov, and Pyber~[SODA'08] and Gowers~[Comb.\ Probab.\ Comput'08]. 	


 

\pagenumbering{arabic}
\setcounter{page}{1}

\section{Introduction}
An important problem in theoretical computer science is to efficiently test if a function $f: G \rightarrow H$  is correlated with some homomorphism between groups $G$ and $H$. Such tests are widely used in constructions of \textit{probabilistically checkable proofs }(PCPs), hardness of approximation, locally testable codes, and many other areas of computer science. Recently, there has been an interest in studying such tests for non-Abelian groups. For example, in quantum complexity, \textit{entanglement testing}~\cite{NV17} involves homomorphism testing over the (non-Abelian) Pauli group, which played an important role in the proof of MIP*=RE~\cite{MIP21}. Additionally, such non-Abelian tests have been used for constructions of better PCPs~\cite{BK21}, and hardness of approximation results~\cite{BKM22}.

The famous three-query randomized Blum--Luby--Rubinfield~\cite{BLR90} (BLR) test is as follows: pick two uniformly random group elements $x, y \in G$ and check if the homomorphism property holds for this pair, namely, if $f(x)f(y) = f(xy)$. This simple local test surprisingly sheds light on a global property of the function:  if a function $f:\Z_2^n\rightarrow \Z_2$ passes the test with non-trivial probability, then the function must have a non-trivial correlation with some homomorphism.  

This test can be used for any pair of groups $G, H$ (assuming that one can sample from $G$), and requires $2\log \abs{G}$ random bits. A randomness-efficient version of the test that has been studied is the \textit{derandomized BLR test} wherein $x$ is uniformly sampled from $G$ as before, but $y$ is chosen from a sparse pseudorandom set $S\subseteq G$. If $S$ is constant-sized, then the randomness is reduced to $\log |G| + O(1)$, which is almost optimal.

The study of such derandomized linearity (and low-degree) tests has found significant applications, particularly in the development of Probabilistically Checkable Proofs (PCPs). For example, the derandomization results in \cite{BSVW03} have enabled the construction of PCPs and Locally Testable Codes of nearly linear size. Additionally, derandomized parallel execution \cite{ST00, HW03} of the BLR test has facilitated the creation of PCPs with low amortized costs. We extend this study of derandomized tests and investigate the question,  
\begin{center}
\textit{Given a function $f: G\to H$ that passes the derandomized BLR test with probability $\delta$, what can one conclude about the function $f$?
}\end{center}

An ideal conclusion would be that the function $f$ is close to a true homomorphism $\varphi$ in some metric, \ie $\norm{f-\varphi} \leq \theta(\delta)$. This is achievable in the \enquote{$99\%$-regime}, when the test passing probability, $\delta$, is close to $1$. This is also called the \textit{unique-decoding regime}, as there is a unique homomorphism, $\varphi$, near the given function, $f$. The unique homomorphism can often be constructed via a \textit{majority decoding procedure}. There are many results in this setting~\cite{BCH+95, Farah00, badora2018approximate} including derandomized ones~\cite{SW04}. In particular, for any finite group $G$ and an arbitrary (not necessarily finite) group $H$, \cite{Farah00} constructs a homomorphism close in Hamming metric to the given function $f$, if the test passing probability is $\geq 10/11$.

 However, the situation is significantly more complex in the low-soundness or \enquote{$1\%$-regime}, wherein the function performs barely better than a random function, \ie the test passing probability is $\frac{1}{|H|} + \delta$. Firstly, one cannot always hope to find a homomorphism close in the Hamming metric. A folklore counterexample due to Coppersmith (also in \cite{BCLR07}) gives a function $f: \Z_{3^k}\to \Z_{3^{k-1}}$ that passes the test with probability $2/9$ but it is far away from every homomorphism in the Hamming metric. More interestingly, for some pair of groups $G,H$, the only homomorphism from $G\to H$ might be the trivial one. In this case, we might not be able to conclude that $f$ is close to the trivial homomorphism but still deduce something about the global structure of $f$. To do so, however, we need to have a better understanding of how the set of functions from $G$ to $H$, relates to the set of homomorphisms. For instance, Fourier analysis yields that any function $f: \Z_2^n\to \C$ can be expressed as a linear combination of homomorphisms. In general, \textit{representation theory} gives a similar relation for the more general setting of functions $f: G\to \Utt$, where $G$ is any finite group, and $\Utt$ is the group of $t\times t$ unitary matrices. This setting is, therefore, a natural starting point to start investigating the general question of derandomized testing for non-Abelian groups. Moreover, this setting (which we work with throughout our paper) has interesting connections to quantum linearity testing!

 \subsection{Our Setup: Matrix-valued functions}
We will work with functions from an arbitrary finite group $G$ to the group of $t\times t$ unitary matrices, $f: G\to \Utt$, and will use the following inner product to measure correlation, \ie  $\ip{f}{g}_\tr = \Ex{x \sim G}{\tr\parens{g^*(x)f(x)}}$. 
 We wish to design a randomness-efficient variant of BLR such that if a function $f: G \to \Utt$ passes such a test, then the function $f$ correlates with a homomorphism, or a functions arising from a homomorphism.

This setting has been studied in prior works~\cite{MR15, NV17, GH17}, most importantly in the context of quantum low-degree tests. {The results of~\cite{MR15} and \cite{GH17} are particularly relevant to our result, and we will discuss them in detail shortly. The other result by Natarajan and Vidick~\cite{NV17} gives a BLR-like test for homomorphism testing of functions, $f: \mathcal{P}^n\to \Utt$ where $\mathcal{P}^n$ is the $n$-fold tensor product of the \textit{Pauli group} (also known as the \textit{Weyl-Heisenberg} group). This was initially developed for entanglement testing~\cite{NV17}, and later became a crucial component in the $\text{MIP}^*=\text{RE}$ proof~\cite{MIP21}. 

While our setting encompasses their setup and has identical notions of correlation, our results do not directly apply to the quantum linearity test. This is because their test works with a specific presentation of the Pauli group due to additional constraints related to quantum measurements. Nevertheless, there might be interesting connections between our results and those in quantum homomorphism testing.

Before we state our results, we very briefly define some relevant concepts and discuss the challenges associated with this setting. See~\cref{sec:prelim} for detailed definitions.

 A \textit{group representation} is a pair $(\rho, V)$ where $\rho: G\to \U_V $ is a homomorphism from the group $G$ to the group of unitary operators on the complex Hilbert space $V$. If $V$is idendified with $\C^t$, we use the notation, $\U_t$. Every finite group $G$ has a finite set of \textit{irreducible representations} (irreps) $\widehat{G}$, which are the building blocks of complex-valued functions on $G$. In the case of Abelian groups, all the irreducible representations are one-dimensional and given by the Fourier characters. These characters also form an orthogonal basis for the space of complex-valued functions. For general finite groups, the orthogonal basis is given by the set of matrix coefficients of irreps, \ie $\braces[\big]{\rho(x)_{i,j} \mid \rho \in \widehat{G}\; i,j \leq \dim(\rho)}$. Just as in the Abelian case, we use $\hat{f}(\rho) = \Ex{x}{f(x) \rho(x)}$ to represent the coefficient corresponding to irrep $\rho$, which is now a matrix. Such a basis also exists for matrix-valued functions, $f: G\to \C^{t\times t}$, which is a collection of $t^2$ scalar functions.

\subsubsection{Challenges with this general setting}

 This general setting has three key differences with the original setting of BLR $f:\Z_2^n \to \Z_2$: (i) $G$ is an arbitrary (not necessarily Abelian) finite group, (ii) $H$ is continuous and not discrete, and (iii) the test passing probability is low (low-soundness regime). While each of these generalizations presents its own challenges, these are compounded when they are all together.   In order to clearly illustrate the issues, let us focus only on the case of complex-valued functions, $f: G\to \U_1 \subseteq \C$. The entire discussion is relevant when the functions are matrix-valued, but this special case captures all the difficulties.

\begin{description}
	\item[Hamming norm is unsuitable] Since the codomain is continuous, the Hamming norm is inappropriate as it is sensitive to small perturbations. For example, let $G$ be a finite group such that the only homomorphism to $\mathbb{U}_1$ is the trivial one. These groups exist and are known as \textit{quasirandom groups}. If $f(x)$ is set to $e^{-i\epsilon}$ for half of the inputs from $G$ and $e^{-2i\epsilon}$ for the rest, the BLR test passes with probability roughly $\frac{1}{8}$. But $f$ has a normalized Hamming distance of 1 from the closest homomorphism, \ie the trivial homomorphism. However, $f$ is actually close to the trivial homomorphism in $L^2$-distance, $\norm{f-g}^2 = \Ex{x}{\abs{f(x)-g(x)}^2}$. This suggests why previous works~\cite{GH17, BFL03, MR15} in this setting have used the $L^2$-norm.	
	\item[Need to look at larger representations] For Abelian groups $G$, every function $f:G\to \C$ can be expressed as a linear combination of homomorphisms from $G \to \C$. This is no longer true when $G$ is non-Abelian. As we have seen in the preliminaries, we need to rely on homomorphisms $\rho: G\to \U_t$ for $t$ potentially as large as $\sqrt{|G|}$, even though the original function maps to scalars. Thus, it is not immediate how to formalize the statement  \enquote{$f$ correlates with the homomorphism $\rho$,} as $f: G\to \C$ whereas, $\rho: G \to \U_t$ for $t >1$. There have been two non-equivalent solutions to this in prior work, 
	\begin{enumerate}
		\item Clip\footnote{For functions $f: G\to \U_t$ for $t >1$, the representations that need to be considered could be of dimension $t' < t$ as well. We still refer to it as \enquote{clipping}.	
} a representation -- Let $g_\rho : G\to \C$ be defined as $g_\rho(x) = V\rho (x) U^*$ for a homomorphism $\rho$, and $V, U \in \C^{1\times t}$. One can now check if $f$ correlates with $g_\rho$. This is the route taken by Gowers and Hatami~\cite{GH17}.  
		\item Large Fourier Mass -- If the representation $\rho$ is irreducible, one can look at how much of the function can be explained by the Fourier basis elements corresponding to $\rho$, \ie $\{\rho_{i,j}\}_{i,j}$. Note that these elements are no longer homomorphisms (unlike for Abelian groups), but $\norm{\hat{f}(\rho)}_\HS$ being large is another way to formalize $f$ being correlated with $\rho$, as $\norm{f}^2 = \sum_\rho \dim(\rho)\norm{\hat{f}(\rho)}_\HS^2$ by Parseval's. This approach is followed by \cite{MR15}.   
	\end{enumerate}

	\item[Representation sizes depend on test-passing probability] Ideally, we would want to show the correlation of the function with a representation $\rho$ of dimension $1$. But the above discussion shows why that is too much to ask for. Nevertheless, one might still want to bound the dimension of these representations, the intuition being that the smaller the dimension, the closer $f$ is to being a true homomorphism. We will see that this can be done but this dimension must depend on the test passing probability $\delta$. Such a dependence is unavoidable, as the following counterexample illustrates. Let $\Gamma$ be a non-Abelian group containing an irrep of dimension $d \simeq \poly(|\Gamma|)$, say $\rho$. Consider the group $G = \Z_2^n \times \Gamma$ and its irrep, $\psi = \triv\, \otimes \rho $. The function $f(x) = \psi(x)_{1,1}$ passes the randomized BLR test if either of the query points is in the kernel of this irrep. Since $\Z_2^n$ lies in the kernel, the test passes with probability at least $\frac{1}{\abs{\Gamma}}$. However, by construction, $f$ is entirely supported on a $d$-dimensional irrep, $\psi$.
\end{description}

\subsubsection{Known Results}

As previously mentioned, despite the outlined challenges, researchers have achieved intriguing results in this matrix-valued function setting that we are interested in. Specifically, Moore and Russell~\cite{MR15} as well as Gowers and Hatmai~\cite{GH17} explored the homomorphism testing problem for functions from any finite group $G$ to $\Utt$. It is important to note that their studies did not explicitly focus on this question from a testing perspective. Nevertheless, their findings can be easily adapted into a BLR-type linearity testing framework. Additionally, when translated into homomorphism testing terminology, their approaches correspond to the same Hilbert-Schmidt version of the BLR test that we employ. Below, we summarize the homomorphism testing results derived from these two studies.

\begin{theorem}[Tests from \cite{GH17, MR15}]
	Let $G$ be any finite group, and $f: G\rightarrow \Utt$ be a unitary matrix-valued function. Assume that the function $f$ passes the BLR test with probability $ \delta$. Then,
	\begin{enumerate}
		\item ({Correlation with clipped representation~\textup{\cite{GH17})}} :  There is a representation, $\pi: G\rightarrow \U_{t'}$ and two matrices, $V,U \in \C^{t\times t'}$, such that $f$ correlates with the function, $g_\pi = V\pi(x)U^*: G\to \Utt $ :
		\[{\angles[\big]{f, g_\pi }_\tr}  ~\geq~ \tfrac{\delta^2}{4}\cdot t
		\] 
		Moreover, the representation has bounded dimension, $ t' = \dim(\pi) \in {\brackets[\big]{\delta^2 t, \frac{2t}{\delta^2}}}$. 
		\item (Fourier Mass on a low-dimensional irreducible representation~\textup{\cite{MR15})} : There exists an irrep $\rho \in \widehat{G}$ such that $\dim(\rho) < \frac{2t}{\delta^2}$ and $\|\widehat{f}(\rho)\|^2_{\HS} \geq \frac{\delta^2}{2}$. 
	\end{enumerate}
\end{theorem}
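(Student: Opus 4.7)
The plan is to derive both statements from a Peter--Weyl Fourier analysis of the BLR test passing probability on $G$. Starting from $\delta\,t = \Ex{x,y}{\tr(f(xy)^* f(x) f(y))}$, expanding $f$ in matrix coefficients of the irreps $\rho \in \widehat{G}$ and applying Plancherel's theorem yields an identity of the form
\[
	\delta\, t \;=\; \sum_{\rho \in \widehat{G}} \dim(\rho)\, T_\rho\bigl(\hat f(\rho)\bigr),
\]
where $\hat f(\rho) := \Ex{x}{f(x) \otimes \rho(x)^*}$ is the matrix-valued Fourier coefficient viewed as an operator on $\C^t \otimes V_\rho$, and each $T_\rho$ is a cubic trace expression controlled via the matrix Hölder inequality $|\tr(A^*BC)| \le \opnorm{C}\,\hsnorm{A}\,\hsnorm{B}$ by $|T_\rho(\hat f(\rho))| \le \opnorm{\hat f(\rho)} \cdot \hsnorm{\hat f(\rho)}^2$. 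Unitarity of $f$ gives the Parseval normalization $\sum_{\rho} \dim(\rho)\,\hsnorm{\hat f(\rho)}^{2} = t$.

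For Part 2 (Moore--Russell), chaining these bounds gives $\delta\, t\le \max_\rho \opnorm{\hat f(\rho)}\cdot t$, so some irrep $\rho^\star$ satisfies $\hsnorm{\hat f(\rho^\star)}^{2} \ge \opnorm{\hat f(\rho^\star)}^{2} \ge \delta^{2} \ge \delta^2/2$. Parseval then forces $\dim(\rho^\star) \le t/\hsnorm{\hat f(\rho^\star)}^{2} \le t/\delta^{2} < 2t/\delta^{2}$, establishing both conclusions with constants to spare.

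For Part 1 (Gowers--Hatami), I would construct $\pi$ as a direct sum of irreps carrying noticeable operator-norm Fourier mass and extract $V, U$ from the singular value decompositions of the corresponding $\hat f(\rho)$. Let $S = \{\rho \in \widehat{G} : \opnorm{\hat f(\rho)} \ge \delta/2\}$; the Hölder estimate shows that $\widehat{G}\setminus S$ contributes at most $(\delta/2)\cdot t$ to the Plancherel sum, so $S$ contributes at least $\delta t/2$. For each $\rho \in S$, perform the SVD $\hat f(\rho) = \sum_i \sigma_{\rho,i}\, u_{\rho,i} v_{\rho,i}^*$ with $u_{\rho,i}, v_{\rho,i} \in \C^t \otimes V_\rho$, retain triples with $\sigma_{\rho,i}$ above a threshold, and assemble $V, U \in \C^{t \times t'}$ by unpacking the retained singular vectors through the $\C^t$ factor, setting $\pi := \bigoplus_{\rho \in S} \rho^{\oplus m_\rho}$ where $m_\rho$ counts the retained triples. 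The correlation $\langle f, V\pi U^*\rangle_\tr$ then reduces to a weighted sum of retained singular values, and the $\delta t/2$ lower bound yields at least $\delta^2 t/4$.

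The main obstacle is the two-sided dimension bound $\delta^2 t \le t' \le 2t/\delta^2$ on $t' = \dim(\pi)$. The upper bound follows from Parseval: each retained singular value has $\sigma^2 \ge c\delta^2$ for an explicit constant $c$, so $t' = \sum_\rho m_\rho \dim(\rho) \le (1/c\delta^2)\sum_\rho \dim(\rho)\,\hsnorm{\hat f(\rho)}^{2} = t/(c\delta^2)$, with $c$ tuned to give $2t/\delta^2$. The lower bound uses the trace estimate $|\langle f, V\pi U^*\rangle_\tr| \le \opnorm{V}\opnorm{U}\cdot \min(t, t')$, which after normalizing $V, U$ to unit operator norm and combining with the $\delta^2 t/4$ correlation forces $t' \ge \Omega(\delta^2 t)$. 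Matching the stated constants on both sides while keeping the SVD truncation consistent across irreps is where I expect the most technical care to be required.
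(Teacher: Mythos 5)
Your proof takes a genuinely different route from the paper's. The paper treats this theorem as a black box citing \cite{GH17,MR15}, and when it proves the derandomized generalization (\cref{thm:main}, whose $\epsilon = 0$, $\gamma = 0$ case recovers this statement) it goes via the $\Ut$-norm: \cref{clm:testPassing} uses Cauchy--Schwarz twice to pass from the test-passing probability to $\norm{f}_{\Ut}\geq \delta^2 t$, and then Part~1 is exactly the Gowers--Hatami inverse theorem (\cref{thm:GowersHatami}) applied as a black box with $c=\delta^2$, while Part~2 follows by splitting the Fourier expansion of $\norm{\tilde f*f}^2$ into low- and high-dimensional irreps and invoking the BNP bound (\cref{clm:MxBNP}). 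You instead expand the cubic $\Ex{x,y}{\tr(f(xy)^*f(x)f(y))}$ directly in Fourier and bound each term by matrix H\"older, bypassing the $\Ut$-norm entirely.

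For Part~2 your argument is correct and in fact cleaner: $\delta t\le \Re\Ex{x,y}{\tr(f(xy)^*f(x)f(y))} = \Re\sum_\rho d_\rho\Tr\parens{\hat f(\rho)^*\hat f(\rho)^2}$, the H\"older bound $\abs{\Tr(A^*A^2)}\le\opnorm{A}\hsnorm{A}^2$ plus Parseval gives $\max_\rho\opnorm{\hat f(\rho)}\ge\delta$, and then $\hsnorm{\cdot}\ge\opnorm{\cdot}$ together with $d_\rho\hsnorm{\hat f(\rho)}^2\le t$ closes it. This saves the factor of $2$ that the paper loses in the $D$-vs-$t/D$ tradeoff and avoids the BNP lemma altogether, so it is a legitimate improvement in this special case. (One caveat: the paper's derandomized argument needs to push the expectation over $y\sim S$ through a positive-definite function of $y$, which is why the $\Ut$-norm route is the one that survives derandomization; your cubic does not obviously admit the same treatment.)

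For Part~1 there is a real gap: you are effectively re-deriving the Gowers--Hatami inverse theorem from scratch, and the sketch does not go through as written. The cubic quantity per irrep, expanded via the SVD $\hat f(\rho)=\sum_i\sigma_i u_iv_i^*$, is $\Tr\parens{\hat f(\rho)^*\hat f(\rho)^2}=\sum_i\sigma_i^3\angles{u_i}{v_i}$, which carries complex phases $\angles{u_i}{v_i}\in\C$. Thresholding on $\sigma_{\rho,i}$ alone does not control the sign of the retained contribution to $\angles{f}{V\pi U^*}_\tr$, and the phases can conspire to cancel. This is exactly why the GH proof first applies Cauchy--Schwarz to pass to $\norm{f}_{\Ut}=\sum_\rho d_\rho\sum_i\sigma_{\rho,i}^4$, a manifestly nonnegative quartic where thresholding is unambiguous. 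You would need to absorb the phases into $u_i$ or $v_i$ before assembling $V,U$, and redo the accounting; moreover your rank-based lower bound only yields $t'\ge\frac{\delta^2}{4}t$, not the stated $\delta^2 t$. The intended reading of the statement is that Part~1 is imported wholesale from \cref{thm:GowersHatami}, and reproducing it is substantially more work than the sketch acknowledges.
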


\begin{remark}
	Note that the representation, $\pi$, in part 1 of the theorem is not guaranteed to be irreducible, but the second one is.  
\end{remark}

 \subsection{Our Results}
 
 The main contribution of this work is to give a derandomized BLR-like homomorphism test in the low soundness regime for the general setup of functions from an arbitrary finite group $G \to \Utt$. Prior to this work, the only known derandomized test in the $1\%$-regime is that of~\cite{BSVW03} for the case when $G = \Z_p^n$ and $H = \Z_p$.

  Our key derandomization tool is  \textit{small--bias sets}, which are those that \enquote{fool} irreducible representations. Formally, a set $S \subseteq G$ is
  $\epsilon$-biased if for every non-trivial irreducible
  representation $\rho$, we have $\opnorm{ \Ex{s\sim S}{\rho(s)} }
  \leq \epsilon$. Here, the operator norm is the largest singular value of the operator. In the Abelian case, the irreducible representations are \emph{characters}, and thus, this definition is a generalization of the usual one of \emph{fooling} non-trivial characters~\cite{NN93,AGHP92}.
 We have the following derandomization of a result of Alon--Roichman~\cite{AR94}, 
 
 \begin{theorem}[{\cite[Thm 5.1]{WX08}}]\label{thm:alon}
 	For every finite group $G$ and any constant $\ep > 0$, there exists a deterministic $\poly(|G|)$-time algorithm that outputs an $\ep$-biased set $S\subseteq G$ of size $|S|\leq O\parens[\Big]{\frac{\log |G|}{\ep^2}}$. 
 \end{theorem}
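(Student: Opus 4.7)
My plan is to combine the probabilistic Alon--Roichman argument with the method of pessimistic estimators, using the regular representation to make the derandomization efficient. First, I would establish the existence of such a set via matrix concentration. Take $k = O(\log|G|/\epsilon^2)$ independent uniform samples $s_1,\ldots,s_k \in G$. For each non-trivial irrep $\rho$, Schur orthogonality gives $\Ex{}{\rho(s_i)} = 0$, so the Ahlswede--Winter matrix Chernoff inequality yields
\[
\Psymb\Brackets*{\,\opnorm{\tfrac{1}{k}\textstyle\sum_{i} \rho(s_i)} > \epsilon\,} \;\leq\; 2\dim(\rho)\cdot e^{-c k \epsilon^2}
\]
for a universal constant $c>0$. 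Union-bounding over $\widehat{G}\setminus\{\triv\}$ and using $\sum_\rho \dim(\rho)^2 = |G|$ (which implies $\sum_\rho \dim(\rho) \leq |G|$), the failure probability drops below $1$ once $k \geq C\log(|G|)/\epsilon^2$, establishing existence.

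Next, to derandomize, I would replace the failure probability by its matrix-exponential upper bound and use this as a pessimistic estimator. For a suitable $\lambda = \Theta(\epsilon)$, define
\[
\Phi(s_1,\ldots,s_j) \;=\; \sum_{\rho \neq \triv} \dim(\rho)\cdot \Tr\brackets*{\,\Ex{s_{j+1},\ldots,s_k}{ \cosh\parens[\big]{\lambda\textstyle\sum_{i\leq k} \rho(s_i)}}\,}.
\]
Because the samples are independent and $\cosh$ dominates the relevant exponential moments, standard Chernoff manipulations show $\Phi(\emptyset)$ upper bounds (a constant times) the union-bound failure probability. By the averaging principle there exists $s_{j+1}\in G$ with $\Phi(s_1,\ldots,s_{j+1}) \leq \Phi(s_1,\ldots,s_j)$; greedily picking such an element at each of the $k$ steps produces a deterministic $S$ with $\Phi(S) < 1$, and therefore $\opnorm{\Ex{s\sim S}{\rho(s)}} \leq \epsilon$ for all non-trivial $\rho$.

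The main obstacle is computing $\Phi$ in polynomial time, since the sum ranges over all irreps of $G$ and we have no a priori access to $\widehat{G}$. The key observation that rescues efficiency is that the regular representation decomposes as $\reg \cong \bigoplus_{\rho\in\widehat{G}} \dim(\rho)\cdot\rho$, so for any $G$-element-valued expression $M$,
\[
\sum_{\rho} \dim(\rho)\cdot \Tr\brackets*{\cosh(\lambda\, \rho(M))} \;=\; \Tr\brackets*{\cosh(\lambda\, \reg(M))},
\]
a single trace over a $|G|\times|G|$ matrix that is computable in $\poly(|G|)$ time (via diagonalization, or by truncating the Taylor series at $O(\log|G|/\epsilon^2)$ terms since higher-order tails are negligible). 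One subtracts off the trivial-irrep contribution explicitly, and observes that the expectation over the remaining random samples factors over independent coordinates, so evaluating $\Phi$ for a given prefix reduces to a product of $|G|\times|G|$ matrix expectations. At each of the $k$ steps we evaluate $\Phi$ for each of the $|G|$ candidates for the next element, giving the claimed $\poly(|G|)$ runtime.
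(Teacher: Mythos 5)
The paper does not prove this statement---it is imported verbatim from Wigderson--Xiao \cite[Thm.~5.1]{WX08}---so what you have written is a blind reconstruction of their argument. Your high-level outline (Ahlswede--Winter matrix Chernoff for existence; pessimistic estimator for derandomization; the regular representation $\reg\cong\bigoplus_{\rho}\dim(\rho)\,\rho$ to collapse the sum over $\widehat G$ into a single $|G|\times|G|$ trace) is exactly the route taken in \cite{WX08}, and the existence half is correct as written.

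However, the estimator $\Phi$ you define does not quite work, for two reasons. First, it is missing the normalizing factor $e^{-\lambda k \epsilon}$; since $\cosh(\cdot)\succeq I$ pointwise, your $\Phi(s_1,\dots,s_k)\geq \sum_{\rho\neq\triv}\dim(\rho)^2 = |G|-1$ always, so the terminal condition $\Phi(S)<1$ can never be reached and the final implication ``$\Phi(S)<1\Rightarrow\opnorm{\Ex{s\sim S}{\rho(s)}}\leq\epsilon$'' has nothing to bite on. Second, and more substantively, the efficiency claim that ``the expectation over the remaining random samples factors over independent coordinates'' is false: $\rho(s_i)$ and $\rho(s_j)$ do not commute, so $\Ex{}{\cosh(\lambda\sum_i \rho(s_i))}$ is not a product of one-sample expectations, and evaluating your $\Phi$ as written requires summing over all $|G|^{k-j}$ suffixes. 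The Wigderson--Xiao fix is to \emph{define} the pessimistic estimator directly as the Golden--Thompson product-form bound (roughly $e^{-\lambda k\epsilon}\,\Tr\bigl[e^{\lambda\sum_{i\leq j}\reg(s_i)}\bigr]\cdot\opnorm{\Ex{}{e^{\lambda \reg(s)}}}^{k-j}$, plus the $-\lambda$ twin), rather than as an exact conditional expectation. This quantity \emph{is} computable in $\poly(|G|)$ time, but because it is only an upper bound rather than a martingale, the monotone-decrease step $\min_{s_{j+1}}\Phi_{j+1}\leq\Phi_j$ has to be re-verified via Golden--Thompson (namely $\Ex{}{\Tr[e^{A+\lambda\reg(s)}]}\leq\Tr[e^{A}\Ex{}{e^{\lambda\reg(s)}}]$) rather than falling out of the tower property. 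You clearly have the product form in mind at the end, but you present it as an identity rather than an inequality, which is where the argument actually lives.
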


 While this bound is tight over Abelian groups, we can do much better for other groups. Particularly for all \textit{finite simple groups} we now have explicit constant-sized small-biased sets due to a long line of work~\cite{KN06, Kas07, Lub11}. These can also be made near-optimal using the amplification machinery in~\cite{JMRW22}. 
 
 \begin{theorem}[{\cite{KN06, Kas07, Lub11, JMRW22}}]\label{thm:simple}
 	For every non-abelian finite simple group $G$ and any  $\ep > 0$, there exists a deterministic $\poly(1/\ep)$-time algorithm that outputs an $\ep$-biased set $S\subseteq G$ of size $|S|\leq O\parens[\big]{{\ep^{-(2+o(1))}}}$. 
 \end{theorem}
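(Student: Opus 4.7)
The plan is to combine two pieces: (i) an absolute-constant-sized generating set $T \subseteq G$ of constant bias, and (ii) a bias-amplification procedure that preserves near-optimal size.

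For (i), I would invoke the Kassabov--Lubotzky--Nikolov theorem, assembled through~\cite{KN06, Kas07, Lub11}, which produces, for every non-abelian finite simple group $G$, an explicit generating set $T$ with $|T| \leq k_0$ such that $\Cay(G, T)$ has spectral gap at least $\gamma_0$, for absolute constants $k_0, \gamma_0 > 0$. Unpacking the spectral gap representation-theoretically, this is exactly the assertion that for every non-trivial irreducible representation $\rho$ of $G$,
\[
\opnorm{\Ex{s \sim T}{\rho(s)}} \leq 1 - \gamma_0,
\]
i.e., $T$ is a $(1-\gamma_0)$-biased set in the sense of this paper. The constructions come with short symbolic descriptions of each generator in a standard presentation of $G$, so the output is producible in time independent of $|G|$.

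For (ii), I would invoke the amplification machinery of~\cite{JMRW22}, which is a non-abelian analog of Ta-Shma's $\ep^{-(2+o(1))}$-size small-bias construction. Given any $\ep_0$-biased multiset $S_0 \subseteq G$ with $\ep_0$ bounded away from $1$, their procedure uses a pseudorandom walk on an explicit near-Ramanujan expander to output an $\ep$-biased multiset of size $|S_0| \cdot \ep^{-(2+o(1))}$ in time $\poly(|S_0|, 1/\ep)$. Feeding in the constant-sized $T$ from step (i) yields an $\ep$-biased set of size $\ep^{-(2+o(1))}$, producible in time $\poly(1/\ep)$ as claimed.

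The hard part is step (i) itself: establishing that every non-abelian finite simple group is a bounded-degree expander is the deep content of the Kassabov--Lubotzky--Nikolov program, requiring a case analysis across families of Lie type, alternating groups, and the sporadic groups. For the derandomized construction asserted in the theorem, however, we treat this as a black box and only verify the routine translations: that the spectral-gap bound on $\Cay(G, T)$ is equivalent to a bias bound on $T$ (a unitarity computation together with Schur's lemma), and that the walk analysis of~\cite{JMRW22} never uses commutativity---it invokes only submultiplicativity of the operator norm together with the per-step contraction $\opnorm{\Ex{s \sim S_0}{\rho(s)}} \leq \ep_0$.
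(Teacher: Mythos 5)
Your proposal correctly unpacks what the paper treats as a black-box citation: \cref{thm:simple} is stated without proof, attributed to the cited works, and the two-stage plan you give---the Kassabov--Lubotzky--Nikolov bounded-generation expansion result, followed by the non-abelian Ta-Shma-style amplification of~\cite{JMRW22}---is exactly the intended assembly. The translation between the spectral-gap bound on $\Cay(G,T)$ and the operator-norm bias condition via Schur's lemma is standard, as is the observation that the walk analysis of~\cite{JMRW22} uses only submultiplicativity and per-step contraction, never commutativity.

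One thing your write-up surfaces and is worth flagging: you correctly restrict step (i) to \emph{non-abelian} finite simple groups, whereas the theorem as stated in the paper says ``every finite simple group.'' Taken literally, the statement is false for the abelian simple groups $\Z_p$: cyclic groups are not a bounded-degree expander family (e.g., a $k$-generated Cayley graph on $\Z_p$ has diameter $\Omega(p^{1/k})$, forcing the second eigenvalue toward $1$), so for any fixed $\ep<1$ the minimum size of an $\ep$-biased set in $\Z_p$ grows unboundedly with $p$, and no bound of the form $O\parens[\big]{\ep^{-(2+o(1))}}$ uniform in $|G|$ can hold. The intended reading is ``non-abelian finite simple group,'' which is the only case the cited results cover, and the theorem statement should be amended accordingly.
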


 \paragraph{Result 1: Derandomized Homomorphism Testing} 
 We analyze the following derandomized variant of BLR. Here, the parameter $\gamma$ allows for a relaxed version of this test that makes the test robust to small noise, which can be useful as $\Utt$ is a continuous group.  
 \begin{center}
 	 \fbox{\begin{minipage}{12 cm}
		~~
		\\
		{\bf Derandomized BLR}$_{\gamma}(G,S,f)$:
		\begin{enumerate}
			\item Sample $x\sim G, y\sim S$. 
			\item If $\norm{f(xy)-f(x)\cdot f(y)}^2_{\HS} ~\leq~ \gamma t$, output {\it Pass}. Else, output {\it Fail}. 
			~~
			\\
		\end{enumerate}	
\end{minipage}}
 \end{center}

Setting $\gamma = 0$ recovers the usual derandomized version of the BLR test that has been used in previous derandomizations of homomorphism tests~\cite{BSVW03, SW04}. The following table compares our derandomized test with other tests.

\begin{table}[h]
	\begin{center}
		\small
		\begin{tabular}{cccc}
			\toprule
			Work & Setting ($f: G\to H$)  & Conclusion  & Randomness
			\\[2pt]
			\midrule
			\multicolumn{4}{c}{High Soundness}\\[1.9pt]
			\midrule
			\cite{BLR90} & $G = \Z_2^n,\, H = \Z_2$ & Hamming & $2\log |G|$	\\
			\cite{BCLR07} & $G,H$ any finite groups & Hamming & $2\log |G|$	\\
			\cite{SW04} & $G,H$ any finite groups & Hamming & $(1+o(1))\log |G|$	\\[2pt]  \midrule
			\multicolumn{4}{c}{Low Soundness}\\[1.8pt]
			\midrule
			\cite{BCH+95} & $G = \Z_2^n,\, H = \Z_2$ & Hamming & $2\log |G|$	\\
			\cite{Kiwi03} & $G = \Z_p^n,\, H = \Z_p$ & Hamming & $(2+o(1))\log |G|$	\\
			\cite{BSVW03} & $G = \Z_p^n,\, H = \Z_p$ & Hamming & $(1+o(1))\log |G|$	\\
			\cite{Sam07, Sanders10, GGMT23} & $\Z_2^n \to \Z_2^m$ & Hamming & $2\log |G|$	\\
			\cite{BFL03} & $G$ finite Abelian,  $H = \U_1$ & Correlation & $2\log |G|$	\\
			\cite{MR15} & $G$ any finite group,  $H = \Utt$ & Correlation & $2\log |G|$	\\
			\cite{GH17} & $G$ any finite group,  $H = \Utt$ & Hilbert-Schmidt  & $2\log |G|$	\\
			Our Result & $G$ any finite group,  $H = \Utt$ & Hilbert-Schmidt & $(1+o(1))\log |G|$\\
			Our Result & $G$ any finite group,  $H = \Utt$ & Correlation & $(1+o(1))\log |G|$\\
			\hline
		\end{tabular}\caption{A summary of prior works on homomorphism testing}
	\end{center}
\end{table}

\begin{theorem}[Informal version of~\cref{thm:main}]\label{thm:intro_main}
	Let $G$ be any finite group, and $f: G\rightarrow \Utt$ be a unitary matrix-valued function. Let $S\subseteq G$ be an $\ep$-biased set. Assume that the function $f$ passes the derandomized BLR test with probability $ \delta > \sqrt{\ep}$.
Then,
	\begin{enumerate}
		\item ({Correlation with clipped representation}): There is a representation, $\pi: G\rightarrow \U_{t'}$ and two matrices, $V,U \in \C^{t\times t'}$, such that for $g_\pi = V\pi(x)U^*: G\to \Utt $, $f$ correlates with $g_\pi$,
		\[{\angles[\big]{f, g_\pi }_\tr}  ~\geq~ \tfrac{\delta^2 -\ep}{4}\cdot t
		\] 
		Moreover, the representation has bounded dimension, $ (\delta^2 -\ep) t \leq t' = \dim(\pi) \leq \frac{2t}{\delta^2-\ep}$. 
		\item ({Fourier Mass on a low-dimensional irreducible representation}):  There exists an irrep $\rho \in \widehat{G}$ such that $\dim(\rho) < \frac{2t}{\delta^2-\ep} $ and $\|\widehat{f}(\rho)\|^2_{\HS} \geq \frac{\delta^2-\ep}{2}$. 
	\end{enumerate}
	Moreover, if one uses the $\gamma$-robust BLR test, the same conclusions hold with $\delta$ replaced by $\delta-(\gamma/2)$. 
\end{theorem}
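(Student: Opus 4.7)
The plan is to adapt the full-randomness analyses of Gowers--Hatami~\cite{GH17} and Moore--Russell~\cite{MR15} to the regime where one of the two uniform samples is replaced by a sample from the $\ep$-biased set $S$, incurring only an $\ep$ loss in the final bound. The proof has three stages: (i) convert the pass probability to a squared-correlation lower bound; (ii) derandomize this lower bound via a degree-2 expander mixing lemma; (iii) extract the representation-theoretic conclusions from the resulting full-randomness $\Ut$-type quantity via~\cite{GH17, MR15}.

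\emph{Stage 1.} Since $f$ is unitary-valued, the identity $\|f(xy) - f(x) f(y)\|_{\HS}^2 = 2t - 2\Re \tr(f(xy)^* f(x) f(y))$ turns the pass event at robustness $\gamma$ into $\Re \tr(f(xy)^*f(x)f(y)) \ge (1-\gamma/2) t$. Integrating the pass indicator against the real trace and applying Jensen's inequality yields
\[
\Ex{x \sim G,\, y \sim S}{\abs{\tr\bigl(f(xy)^* f(x) f(y)\bigr)}^2} \;\ge\; \delta^2 (1-\gamma/2)^2 t^2 \;\ge\; (\delta - \gamma/2)^2 t^2.
\]
Via the identity $|\tr(M)|^2 = \tr(M \otimes \bar M)$, the left-hand side is the (derandomized) BLR correlation of $F := f \otimes \bar f : G \to \U_{t^2}$, and is equivalently a matrix-valued Gowers $\Ut$-type quantity attached to $f$.

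\emph{Stage 2.} The technical heart is the transfer
\[
\Ex{x, y \sim G}{\abs{\tr(\cdots)}^2} \;\ge\; \Ex{x \sim G,\, y \sim S}{\abs{\tr(\cdots)}^2} - \ep \cdot t^2.
\]
After expanding the square and Peter--Weyl-expanding each matrix entry of $f$ in the irreducible representations of $G$, the discrepancy decomposes into a bilinear sum indexed by pairs $(\rho, \sigma) \in \widehat{G} \times \widehat{G}$ whose $y$-dependence enters through $\Ex{y \sim S}{\rho(y)}$ (and its conjugate). The small-bias bound $\opnorm{\Ex{y \sim S}{\rho(y)}} \le \ep$ on every non-trivial irrep, combined with Parseval over the Fourier coefficients of $f$, produces the claimed $\ep \cdot t^2$ loss. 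The name ``degree-2'' reflects the bilinear appearance of $y$ after squaring.

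\emph{Stage 3 and main obstacle.} With $\Ex{x,y \sim G}{|\tr(\cdots)|^2} \ge (\delta^2 - \ep) t^2$ in hand, I invoke the machinery of~\cite{GH17, MR15}: for part~(1), the $\Ut$-bound produces a Stinespring-type positive semidefinite operator whose leading eigenspace yields the representation $\pi$ and the clipping isometries $V, U$, delivering the dimension sandwich and correlation $\ge (\delta^2 - \ep)/4 \cdot t$; for part~(2), a Peter--Weyl expansion of the $\Ut$-quantity (schematically $\sum_\rho \hsnorm{\hat f(\rho)}^4/\dim(\rho)$) combined with Parseval $\sum_\rho \dim(\rho) \hsnorm{\hat f(\rho)}^2 = t$ isolates a single irrep of dimension $\le 2t/(\delta^2-\ep)$ bearing Fourier mass $\ge (\delta^2 - \ep)/2$. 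The principal difficulty is Stage~2: the standard expander mixing lemma controls only linear averages over a small-bias set, whereas our quantity is quadratic in the $S$-sample, so a naive Cauchy--Schwarz would cost $\sqrt{\ep}$ or couple multiple Fourier components in a lossy way. To pay only a single factor of $\ep$, I must carefully isolate the trivial-irrep contribution (which matches the full-randomness target) and bound each non-trivial cross-term by $\ep \cdot \hsnorm{\hat f(\rho)}^2 \hsnorm{\hat f(\sigma)}^2$, so that Parseval closes without a spurious $\|f\|_\infty$ or $t$ factor.
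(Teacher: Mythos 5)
There is a genuine gap, and it opens already in Stage 1. You pass to the quantity $\Ex{x\sim G,\,y\sim S}{\abs{\tr(f(xy)^*f(x)f(y))}^2}$, with the average over $x$ \emph{outside} the square, and in Stage 3 you treat the full-randomness version of this quantity as if it were $\norm{f}_{\Ut}$. It is not, and it cannot control $\norm{f}_{\Ut}$: for $t=1$ (so $f:G\to\U_1$), the matrix $f(xy)^*f(x)f(y)$ is a unit-modulus scalar, hence $\abs{\tr}^2\equiv 1 = t^2$ identically, regardless of $f$ and regardless of test passing; yet $\norm{f}_{\Ut}$ can be as small as $\Theta(1/\abs{G})$ (e.g.\ $f$ a nontrivial character). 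So the implication ``$\Ex{x,y\sim G}{\abs{\tr}^2}\geq (\delta^2-\ep)t^2 \Rightarrow \norm{f}_{\Ut}\geq (\delta^2-\ep)t$'' needed to invoke \cref{thm:GowersHatami} (and, in part (2), the Peter--Weyl expansion you invoke schematically, which is the expansion of $\norm{f}_{\Ut}$, not of $\Ex{x,y}{\abs{\tr}^2}$) is simply false. Squaring the trace per sample discards the phase information that the BLR test is actually detecting; only after averaging over $x$ (forming $\Ex{x}{\tr(f(xy)^*f(x)f(y))} = \ip{f(y)}{(\tilde{f}*f)(y)}_\tr$, where cancellation can occur) does the squared quantity become meaningful, and then it is $\psi(y) := \hsnorm{(\tilde{f}*f)(y)}^2$, whose $G$-average \emph{is} $\norm{f}_{\Ut}$.

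Relatedly, your Stage 2 proposes to prove the derandomization bound by a direct double Peter--Weyl expansion and ad hoc control of the cross-terms, which you acknowledge is ``the principal difficulty.'' The paper avoids this entirely through what is its main structural insight: $\psi$ is a \emph{positive-definite} function (\cref{lem:normispd}), hence $\norm{\psi}_A = \psi(1) \leq t$ (\cref{obs:pd_algnorm}), and therefore it is $\ep t$-fooled by any $\ep$-biased set by \cref{lem:pd_fooled}. That single observation produces the degree-2 EML with the correct $\ep\norm{f}_2^2\norm{g}_2^2$ loss (\cref{lem:bnp}), and it is stated for $\hsnorm{(f*g)(s)}^2$ with the $x$-average inside the Hilbert--Schmidt norm --- not for $\Ex{x}{\abs{\tr(\cdots)}^2}$. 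In short: the fix is to run Cauchy--Schwarz on $\Ex{x\sim G,\,y\sim S}{\Re\tr(f(xy)^*f(x)f(y))}\geq \delta' t$ (over $x$ first, then over $y\sim S$) to reach $\Ex{y\sim S}{\psi(y)}\geq \delta'^2 t$, then use positive-definiteness to transfer to $\Ex{y\sim G}{\psi(y)} = \norm{f}_{\Ut}$, as in the proof of \cref{clm:testPassing}.
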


Using the small-bias set construction from~\cref{thm:alon}, we get a test that uses $\log |G| + \log |S| =  \log |G| + O\parens{\log \log |G|} =  (1+o(1))\log |G|$-random bits. For special families of groups like the class of \textit{finite simple groups}, we can use~\cref{thm:simple} to further reduce the randomness to $\log |G| + O(1)$, which is almost optimal.

\paragraph{Result 2: Derandomized BNP Lemma}  The \enquote{BNP lemma} is a very useful observation due to Babai, Nikolov, Pyber~\cite{BNP08}, and Gowers~\cite{Gow08}. This lemma gives an improvement over \CS \, for \textit{quasirandom groups}, \ie groups with no small non-trivial irreps, and has been used to analyze mixing in progressions~\cite{BHR21}, product-free sets~\cite{Gow08}, and hardness of approximation~\cite{BKM22}, to name a few. In its most general form, it says that for functions $f, g: G\to \Mt(\C)$ that map to $t\times t$-matrices, we have,
\[  \norm{f*g}^2 =   \Ex{s\sim G}{\hsnorm{(f*g)(s)}^2} ~\leq~ \frac{1}{D}\,\norm{f}_2^2 \norm{g}_2^2 \]

We show that such a bound holds even when the average is over a small-bias set $S\subseteq G$, which could be of constant size for some groups! 

\begin{restatable}[Derandomized Matrix BNP]{lemma}{bnp}\label{lem:bnp} 
Let $G$ be a group such that the dimension of the smallest non-trivial irrep is $D$ and let $S\subseteq G$ be an $\ep$-biased set. Let $f,g: G\to \Mt(\C)$ be mean-zero functions. Then,	
\[  \Ex{s\sim S}{\hsnorm{(f*g)(s)}^2} ~\leq~ \parens[\bigg]{\frac{1}{D}+\ep}\,\norm{f}_2^2 \norm{g}_2^2 \]
 The usual BNP lemma can be recovered by setting $S = G$, and thus, $\ep = 0$. 
\end{restatable}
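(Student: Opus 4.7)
The plan is to split the target as
\[
\Ex{s\sim S}{\hsnorm{(f*g)(s)}^2} \;=\; \Ex{s\sim G}{\hsnorm{(f*g)(s)}^2} + \Delta,
\]
where $\Delta$ denotes the bias correction $\Ex{s\sim S}{\hsnorm{(f*g)(s)}^2} - \Ex{s\sim G}{\hsnorm{(f*g)(s)}^2}$. The first summand equals $\norm{f*g}_2^2$ and is bounded by $\tfrac{1}{D}\norm{f}_2^2\norm{g}_2^2$ via the standard (full randomness) matrix BNP lemma; this uses the convolution identity $\widehat{f*g}(\rho) = \hat f(\rho)\hat g(\rho)$, the Parseval bound $\opnorm{\hat f(\rho)}^2 \leq \norm{f}_2^2/\dim(\rho)$ on every non-trivial irrep, and the mean-zero hypothesis on $f$. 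So the task reduces to showing $\abs{\Delta}\leq \epsilon\,\norm{f}_2^2\norm{g}_2^2$, which should be thought of as the ``degree-2 expander mixing lemma'' advertised in the abstract.

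For $\Delta$, I would Peter--Weyl expand each scalar entry $h_{ab}(s) := (f*g)_{ab}(s) = \sum_\rho \dim(\rho)\,\tr(\rho(s)\,A_{ab,\rho})$ with $A_{ab,\rho} := \sum_c \hat f_{ac}(\rho)\hat g_{cb}(\rho)$, and similarly expand its complex conjugate, thereby rewriting $\abs{h_{ab}(s)}^2$ as $\sum_{\rho,\sigma} \dim(\rho)\dim(\sigma)\,\tr\bigl((\rho(s)\otimes\sigma(s^{-1}))(A_{ab,\rho}\otimes A_{ab,\sigma}^{*})\bigr)$. Averaging in $s\sim S$, the key structural fact is that the tensor-product representation $\rho\otimes\sigma^{-1}$ decomposes into irreducibles: its trivial isotypic component reproduces exactly the $\Ex{s\sim G}{}$-contribution (by Schur orthogonality), while every non-trivial summand has operator norm at most $\epsilon$ because $S$ is $\epsilon$-biased. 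Writing $E_{\rho,\sigma} := \Ex{s\sim S}{\rho(s)\otimes\sigma(s^{-1})} - \Ex{s\sim G}{\rho(s)\otimes\sigma(s^{-1})}$, one obtains $\opnorm{E_{\rho,\sigma}}\leq \epsilon$ and
\[
\Delta \;=\; \sum_{\rho,\sigma,a,b}\dim(\rho)\dim(\sigma)\,\tr\bigl(E_{\rho,\sigma}\,(A_{ab,\rho}\otimes A_{ab,\sigma}^{*})\bigr),
\]
where the mean-zero hypothesis on $f$ and $g$ makes the trivial-rep contributions vanish since $A_{ab,\triv} = 0$.

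To bound $\abs{\Delta}$, I would apply $\abs{\tr(E_{\rho,\sigma}X)} \leq \opnorm{E_{\rho,\sigma}}\norm{X}_1$ to $X = A_{ab,\rho}\otimes A_{ab,\sigma}^{*}$, using $\norm{A\otimes B}_1 = \norm{A}_1\norm{B}_1$ and submultiplicativity $\norm{PQ}_1 \leq \hsnorm{P}\hsnorm{Q}$. Three successive Cauchy--Schwarz steps --- over the middle index $c$ in $A_{ab,\rho}$, then over $(a,b)$, and finally over $\rho$ with Plancherel weight $\dim(\rho)$ --- telescope the sum into $\epsilon\bigl(\sum_\rho \dim(\rho)\,\hsnorm{\hat f(\rho)}\,\hsnorm{\hat g(\rho)}\bigr)^{2} \leq \epsilon\,\norm{f}_2^2\norm{g}_2^2$. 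The main delicacy is arranging these steps so that the loss is exactly $\epsilon$, with no parasitic factors of $\dim(\rho)$ or $|G|$: a naive pairing $\hsnorm{E_{\rho,\sigma}}\leq \epsilon\sqrt{\dim(\rho)\dim(\sigma)}$ with $\abs{\tr(EX)}\leq \hsnorm{E}\hsnorm{X}$ produces an unsummable tail $\epsilon\sum_\rho \dim(\rho)^{3/2}\hsnorm{\hat f(\rho)}\hsnorm{\hat g(\rho)}$, whereas the operator-norm/trace-norm pairing above (combined with submultiplicativity) supplies exactly the $\sqrt{\dim(\rho)}$ weighting that Plancherel consumes.
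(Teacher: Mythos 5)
Your high-level split $\Ex{s\sim S}{\cdot} = \Ex{s\sim G}{\cdot} + \Delta$, with $\Ex{s\sim G}{\cdot}$ handled by the full-randomness BNP bound, matches the paper exactly (\cref{clm:MxBNP}). But your method for bounding $\abs{\Delta}\leq\epsilon\norm{f}_2^2\norm{g}_2^2$ is genuinely different, and it works. The paper proves it abstractly: it shows the scalar function $\psi(y)=\hsnorm{(f*g)(y)}^2$ has \emph{Fourier algebra norm} at most $\norm{f}_2^2\norm{g}_2^2$ by exhibiting an explicit factorization $\psi(a)=\ip{\widetilde H}{\phi(a)F}$ on a cleverly chosen function space (\cref{lem:matrix_conv}), then invokes the generic fooling lemma for small-algebra-norm functions (\cref{lem:pd_fooled}). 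You instead unpack everything into an explicit double Fourier expansion over $(\rho,\sigma)$, use small bias on each tensor product of irreps, and pair trace norm against operator norm, with three Cauchy--Schwarz applications tuned so that the $\sqrt{\dim\rho}$ factors land exactly on Plancherel weights. Your route is more elementary and self-contained (no algebra norm, no positive-definiteness, no auxiliary representation on $L^2_t(G\times G)$), at the cost of a more intricate index-juggle; the paper's route is modular and reuses the same two lemmas for the $\Ut$-norm derandomization, which is the larger payoff of the abstraction. Your closing remark correctly diagnoses why the op-norm/trace-norm pairing (rather than an HS--HS pairing) is forced.

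One wrinkle to tighten: $s\mapsto\rho(s)\otimes\sigma(s^{-1})$ is \emph{not} a representation (the second factor is anti-multiplicative), so the statement ``$\rho\otimes\sigma^{-1}$ decomposes into irreducibles \ldots every non-trivial summand has operator norm at most $\epsilon$'' is not literally available for $E_{\rho,\sigma}$ as you defined it. The clean way out is to push the transpose onto the coefficient side: write $\tr(\sigma(s^{-1})B)=\tr(\bar\sigma(s)B^T)$, so that $\abs{h_{ab}(s)}^2$ is a sum of $\tr\bigl((\rho(s)\otimes\bar\sigma(s))(A_{ab,\rho}\otimes\overline{A_{ab,\sigma}})\bigr)$ with the honest representation $\rho\otimes\bar\sigma$ in the first slot. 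Now the small-bias bound applies directly, and the trace norm of the second factor is unchanged (transposition/conjugation preserve Schatten-$1$), so the rest of your telescoping goes through verbatim. If one instead tried to transfer the $\opnorm{\cdot}\leq\epsilon$ bound across the partial transpose, that step would need justification, since partial transpose does not preserve operator norm in general.
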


 \subsection{Technical Overview}

Our main conceptual contribution is to initiate the study of the non-Abelian generalization of two key notions in the analysis of Boolean functions: (i) \textit{spectral norm} of a function and (ii) \textit{spectral positivity}.

\paragraph{Spectral norm and its non-Abelian analog}

The $\ell_1$-norm of the Fourier transform of a function is known as its \textit{spectral norm}. Spectral norm has emerged as an important quantity for the analysis of Boolean functions, \ie functions over $\Z_2^n$. In particular, functions with low spectral norm have a lot of structure~\cite{STV17}: they admit small decision trees, parity decision trees, they are easily learnable, etc.
 
One of the conceptual contributions of this paper is studying the non-Abelian analog of this norm from the perspective of pseudorandomness. A first generalization one can think of would be a similar $\ell_1$ norm of the Fourier coefficients. However, it turns out that the appropriate generalization of the spectral norm is the \textit{Fourier algebra norm}. This was suggested earlier by Sanders~\cite{Sanders21}, who used it to generalize the quantitative idempotent theorem. This norm has multiple equivalent definitions, but our key idea is to use the following harmonic analytic reformulation due to Sanders~\cite{Sanders21} (attributed to~\cite{Eymard64}),
\[ \norm{f}_A = \min_{(\pi,V)} \braces*{\,\norm{u}\cdot\norm{v} \,\big\lvert\; f(x) = \ip{u}{\pi(x)\, v} \,} \] where $(\pi,V)$ is a representation of $G$  and $u, v\in V$. \footnote{The Fourier inversion theorem gives one such an expression for $f$ by using the \textit{regular representation}; although it might not be the one that minimizes the algebra norm, and hence one minimizes over such expressions.}


It is well-known that any function, $f$, on an Abelian group is $\epsilon\norm{\hat{f}}_1$-fooled by any $\ep$-biased set. We show that this neatly generalizes to any finite group by replacing the spectral norm with Fourier algebra norm, any function, $f$, on a finite group is $ \epsilon\norm{f}_A$-fooled by any $\ep$-biased set. 

\paragraph{Spectral Positivity and its non-Abelian analog} A function over an Abelian group $G$, $f: G\to \C$, is spectrally non-negative of $\hat{f}(\chi) \geq 0$ for every character $\chi$. This notion played a key role in the recent breakthrough by Kelley and Meka~\cite{KM23} on 3-AP free sets.

This naturally generalizes to the finite group setting wherein a \textit{positive-definite functions} is a function $f$ such that $\hat{f}(\rho)$ is positive semi-definite for every irreducible representation $\rho$. The important observation is that such functions have small algebra norm, $\norm{f}_A = f(1)$. We use this observation to prove that small--bias sets can be used to approximate the $\Ut$-norm.

\subsubsection{Proof Overview}

Denote $\tilde{f}(x) = f(x^{-1})^*$, and recall the following two norms,
\begin{align*}
(\Ut\text{-norm})& \;\;\;
\norm{f}_{\Ut}^4 = \norm{f*\tilde{f}}^2 = \Ex{x\sim G}{ \norm{(f*\tilde{f})(x)}_\HS^2} \\
(\text{Algebra norm})& \;\;\; \norm{f}_A = \min_{(\pi,V)} \braces*{\,\norm{u}\cdot\norm{v} \,\big\lvert\; f(x) = \ip{u}{\pi(x)\, v}}  
\end{align*}

We now give a quick summary of the key steps involved in the proof:
\begin{enumerate}
	\item  (\cref{lem:pd_fooled}) Small--bias sets fool functions with a {small  algebra norm}.
	\item (\cref{lem:matrix_conv}) Let $f, g : G\to \Utt$ be any functions. Then, the function, $x \mapsto \norm{(f*g)(x)}_{\HS}^2$ has a small Fourier algebra norm.
	\item The above two lemmas imply a degree-2 EML. This immediately yields our result on the derandomized BNP lemma~(\cref{lem:bnp}). We expect that this degree-2 EML will have uses beyond this work, and we explain this below.
	\item A special case of the above EML implies that small bias sets approximate $\Ut$-norm (\cref{cor:u2}). Thus, the test passing probability of the derandomized test implies a large $\Ut$-norm of the function. Combining this with the inverse theorem of Gowers-Hatami~\cite{GH17} gives us the first part of~\cref{thm:intro_main}.
	\item The second part of~\cref{thm:intro_main} follows from the same large $\Ut$-norm consequence implied by test passing. To achieve this, we adapt the proof strategy of the BNP lemma~\cite{BNP08} to our setup, which relies on basic non-Abelian harmonic analysis.
\end{enumerate}


\paragraph{Degree-2 EML} Our key technical contribution is a degree-2 variant of the celebrated \textit{expander mixing lemma} (EML).  Recall that EML characterizes spectral expansion. When applied to the Cayley graph $\Cay(G,S)$, we get that $S$ is an $\ep$-biased set if and only if the EML holds,
 \[
	\abs[\Big]{\Ex{s\sim S}{{(f*g)(s)}} - \Ex{s\sim G}{{(f*g)(s)}}} ~\leq~ \ep \norm{f}_2 \, \norm{g}_2, \;\;\;\; 	\text{(EML)} \mper
\]

 We prove that such sets also satisfy a degree-2 variant of the above inequality,   
 \[
	\abs[\Big]{\Ex{s\sim S}{\norm{(f*g)(s)}^2} - \Ex{s\sim G}{\norm{(f*g)(s)}^2}} ~\leq~ \ep \norm{f}_2^2 \, \norm{g}_2^2, \;\;\;\; 	\text{(Our degree-$2$ EML)} 	\mper
\]

Using it for the special case of where $g(x) = \tilde{f}(x) = f(x^{-1})^*$, we get that small bias sets approximate $\Ut$-norm.

\begin{corollary}[Small bias sets approximate $U^2$-norm]\label{cor:u2} For $f: G\to \Utt$, \[
	\abs[\Big]{\Ex{s\sim S}{\norm{(f*\tilde{f})(s)}^2} - \norm{f}_{\Ut}^4 } ~\leq~ \ep\cdot t  	\mper
\]
Thus, the $\Ut$-norm of a function $f$ can be $\ep$-estimated by querying $f*\tilde{f}$ on an $\ep$-biased set $S$. 	
\end{corollary}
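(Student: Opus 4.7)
The plan is to derive this as a direct specialization of the degree-$2$ expander mixing lemma (the displayed inequality immediately preceding the corollary) by taking $g = \tilde f$. Since all the heavy machinery, i.e., the algebra-norm fooling lemma \cref{lem:pd_fooled} and the algebra-norm bound \cref{lem:matrix_conv}, has already been assembled to prove the degree-$2$ EML, what remains is a one-line substitution together with two easy bookkeeping checks.

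First, I would verify that $\norm{\tilde f}_2 = \norm{f}_2$. Since $\tilde f(x) = f(x^{-1})^*$, the Hilbert--Schmidt norm is invariant under adjoint ($\norm{A^*}_\HS = \norm{A}_\HS$) and inversion is a measure-preserving bijection of $G$, so the change of variables $x \mapsto x^{-1}$ gives $\Ex{x\sim G}{\norm{\tilde f(x)}_\HS^2} = \Ex{x\sim G}{\norm{f(x)}_\HS^2}$. Second, I would recall from the start of this subsection that $\norm{f}_{\Ut} = \Ex{s\sim G}{\norm{(f*\tilde f)(s)}_\HS^2}$ by definition.

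Plugging $g = \tilde f$ into the degree-$2$ EML then gives
\[
\abs[\Big]{\Ex{s\sim S}{\norm{(f*\tilde f)(s)}^2} - \Ex{s\sim G}{\norm{(f*\tilde f)(s)}^2}} \;\le\; \ep\,\norm{f}_2^2\,\norm{\tilde f}_2^2 \;=\; \ep\,\norm{f}_2^4,
\]
and identifying the second expectation on the left with $\norm{f}_{\Ut}$ yields the stated bound. Consequently there is no substantive obstacle here; the entire conceptual load is borne upstream by the degree-$2$ EML itself. The ``querying'' interpretation in the final sentence then follows at once, since the left-hand side can be evaluated using only the values of $f*\tilde f$ on $S$, which by \cref{thm:simple} can be of constant size for natural families such as the finite simple groups.
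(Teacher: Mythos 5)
Your proposal is correct and matches the paper's intended derivation: the paper explicitly obtains \cref{cor:u2} by specializing the displayed degree-$2$ EML (itself the combination of \cref{lem:pd_fooled} and \cref{lem:matrix_conv}) to $g = \tilde f$, and your two bookkeeping checks ($\norm{\tilde f}_2 = \norm f_2$ via the adjoint-plus-inversion change of variables, and $\norm{f}_{\Ut} = \Ex{s\sim G}{\norm{(f*\tilde f)(s)}^2}$) are exactly the remaining steps.
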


\subsection{Related Work}
\paragraph{High soundness regime} Blum--Luby--Rubinfield~\cite{BLR90} analyzed linearity tests for functions of the form $f:\Z_2^n \to \set{\pm{1}}$. This was extended to the setting $f: G \to H$ where both are arbitrary finite groups, by Ben-Or, Coppersmith, Luby, and Rubinfeld~\cite{BCLR07}. This result was derandomized by Shpilka and Wigderson~\cite{SW04}. Going beyond finite groups, Farah~\cite{Farah00}, and later, Badora and Przebieracz~\cite{badora2018approximate}, give homomorphism tests for any \textit{amenable group} $G$, and any group $H$,  equipped with an invariant metric. 
 
\paragraph{Low soundness regime}
Bellare, Coppersmith, H{\aa}stad, Kiwi, and Sudan~\cite{BCH+95} analyzed linearity tests for functions of the form $f:\Z_2^n \to \set{\pm{1}}$ in this low-soundness regime. This was extended to the setting of $\Z_p^n \to {\Z_p}$ by H\aa stad and Wigderson~\cite{HW03}. This result was derandomized using $\ep$-biased sets by Ben-Sasson, Sudan, Vadhan, and  Wigderson~\cite{BSVW03}. For the same setting, Kiwi~\cite{Kiwi03} analyzed a variant of the BLR test that uses a lot more randomness but gives an improved correlation. Samorodnitsky~\cite{Sam07} studied a completely different setup where $H$ is large and not a subset of $\C$. He showed that if a function $f:\Z_2^n \to \Z_2^m$, passes the test with probability $\delta$, then it has an exponentially small agreement with a homomorphism.  Improving the agreement to polynomial in $\delta$ is equivalent to the polynomial Freiman--Rusza (PFR) conjecture which was finally settled recently~\cite{Sanders10, GGMT23}. 

 \section{Preliminaries}
\label{sec:prelim}

Throughout the work, we will assume that $G$ is a finite group, and all vector spaces are finite-dimensional over $\C$. We work with vector spaces of matrix-valued functions, and unless mentioned otherwise, these spaces are equipped with the expectation inner product, $\ip{f}{g} = \Ex{x}{\ip{f(x)}{g(x)}_\tr}$.  We use $\Utt$ to denote the group of $t\times t$ unitary matrices, and $\Mt(\C)$ to denote the set of $t\times t$ complex matrices.

\subsection{Group Representations and small-bias sets}

For finite groups, every representation can be made unitary, and thus, we can restrict to
studying these. Let $V$ be a complex Hilbert space and denote by
 $\mathbb{U}_V$, the {group of unitary operators} acting on $V$. 
 
\begin{definition}[Unitary Group Representation]
  For a group $G$, a unitary representation is a pair $(\rho, V)$
  where $\rho:G\to \mathbb{U}_V$ is a group homomorphism, \ie for
  every $x,y \in G$, we have $\rho(xy) = \rho(x)\rho(y)$.
  A representation is \textit{irreducible} if the only subspaces of $V$ that are invariant
  under the action of $\rho(G)$ are the empty space, $\set{0}$, and the entire space, $V$. For a representation $(\rho, V)$, will use $d_\rho$ to denote $\dim(V)$. 
\end{definition}

We use $\widehat{G}$ to denote the set of all irreducible representations (\textit{irreps}) of a group $G$. Every group has a special irreducible representation called the \textit{trivial representation},  $(\rho_\triv, \C)$, where $\rho(g) = 1$ for every group element $g \in G$. The following is a fundamental result that states that every representation decomposes as a finite sum of irreducible ones.

\begin{theorem}[Maschke]\label{thm:peter}
	Let $G$ be a finite group and let $(\pi, V)$ be any representation of $G$. Then,  $V = \oplus_i V_{\rho_i}$, \ie it decomposes as a direct sum of irreducible representations $\braces{\rho_i}_i$. Explicitly, there exists a unitary transformation $U_\pi$ such that $U_\pi \pi U_\pi^*$ is block-diagonal with each block being $\rho_i$.
\end{theorem}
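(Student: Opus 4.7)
The plan is to prove the theorem by strong induction on $\dim V$. The base case is $\dim V = 1$, where $V$ is trivially irreducible and the decomposition is the single summand $V$ itself. For the inductive step, if $(\pi, V)$ is irreducible, we are already done. Otherwise, by definition of reducibility, there exists a nonzero proper subspace $W \subsetneq V$ that is invariant under the action of $\pi(G)$.

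The key observation—where unitarity enters—is that the orthogonal complement $W^\perp$ is then also $\pi(G)$-invariant. The plan is to verify this directly: for any $v \in W^\perp$, $w \in W$, and $g \in G$, we have $\langle \pi(g) v, w \rangle = \langle v, \pi(g)^* w \rangle = \langle v, \pi(g^{-1}) w \rangle = 0$, since $\pi(g^{-1}) w \in W$ by invariance of $W$, so $\pi(g) v \in W^\perp$. Here we use that $\pi(g)^* = \pi(g)^{-1} = \pi(g^{-1})$ because $\pi$ is a \emph{unitary} representation. (If one started from an arbitrary complex representation, one would first need to average the standard inner product over $G$ to produce a $G$-invariant one; since the excerpt has already reduced to the unitary case, we can skip this step.)

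Now $V = W \oplus W^\perp$ as an orthogonal direct sum, and both summands are $\pi(G)$-invariant with strictly smaller dimension than $V$. Applying the induction hypothesis to the subrepresentations $(\pi|_W, W)$ and $(\pi|_{W^\perp}, W^\perp)$ yields decompositions $W = \bigoplus_j V_{\rho_j}$ and $W^\perp = \bigoplus_k V_{\rho_k}$ into irreducibles. Concatenating gives $V = \bigoplus_i V_{\rho_i}$.

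For the explicit unitary $U_\pi$, the plan is to pick orthonormal bases of each irreducible summand $V_{\rho_i}$ in which $\pi|_{V_{\rho_i}}$ is realized as the matrices $\rho_i(g)$, and to let $U_\pi$ be the change-of-basis matrix from an initially fixed orthonormal basis of $V$ to the concatenation of these bases. Orthogonality of the decomposition ensures that this change of basis is unitary, and by construction $U_\pi \pi(g) U_\pi^*$ is block-diagonal with the $i$-th block equal to $\rho_i(g)$ for every $g \in G$. The only nontrivial content is the invariance of $W^\perp$, and once that is in place, the rest is routine linear algebra plus induction; I do not anticipate any genuine obstacle.
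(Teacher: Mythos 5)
The paper states Maschke's theorem without proof, as it is a classical result; there is no proof in the paper to compare against. Your argument is the standard textbook proof for the unitary case (invariance of $W^\perp$ via $\pi(g)^* = \pi(g^{-1})$, followed by induction on dimension), and it is correct as written.
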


\paragraph{Small--bias sets} Small bias sets over Abelian groups were introduced in the pioneering work of Naor and Naor~\cite{NN93}, and are a fundamental derandomization tool that has been widely used across domains like complexity theory, coding theory, learning theory, graph theory, etc. See the excellent surveys~\cite{HH24, HLW06} for references and details. Apart from the wealth of applications arising from expanders, $\ep$-biased sets for non-Abelian groups have also recently found applications in constructions of near-optimal expanders and quantum expanders~\cite{JMRW22}, constructions of \textit{unitary designs}~\cite{OSP23}.

\begin{definition}[$\delta$-fooled]
A function $f: G\to \Mt(\C)$ is $\delta$-fooled by a set $S\subseteq G$ if,
\[\norm[\Big]{\;\Ex{x\sim G}{f(x)} - \Ex{x\sim S}{f(x)}\;}_{\mathrm{op}} \leq \delta .\]	
\end{definition}

Small--bias sets are those that fool all non-trivial irreducible representations.  In the Abelian case, the irreducible representations are \emph{characters}, and thus, this definition is a generalization of the usual one of \emph{fooling} non-trivial characters~\cite{NN90,AGHP92}. 

\begin{definition}[$\eps$-Biased Set]\label{def:bias_group}
  Let $\epsilon \in [0,1)$.
  We say that a multiset $S \subseteq G$ is
  $\epsilon$-biased if for every irreducible
  representation $\rho$, $\rho$ is $\ep$-fooled by S, \ie $\opnorm{ \Ex{s\sim S}{\rho(s)} } \leq \epsilon$. Here, the operator norm is the largest singular value of the operator
 \end{definition}

The notion of small bias is closely linked to graph expansion, in particular, a symmetric subset $S\subseteq G$ is an $\eps$-biased set if and only if the $\Cay(G,S)$ is an $\eps$-spectral expander, \ie the adjacency matrix has second largest eigenvalue $\epsilon$.

\subsection{Matrix--valued functions and $\Ut$-norm}
Denote by $L^2_t(G) = \{ f: G\to \Mt(\C) \}$, the space of $t\times t$ matrix-valued functions equipped with the trace expectation inner product, 
\begin{equation}\label{eqn:ip_trace}
\ip{f}{g} = \Ex{x\sim G}{\ip{f(x)}{g(x)}_{\tr}} = \Ex{x\sim G}{\Tr\parens[\big]{g(x)^*f(x)}}
\end{equation}
The induced norm is $\norm{f}^2 =  \Ex{x\sim G}{\hsnorm{f(x)}^2}$. 
 For a function $f$, we denote its \textit{adjoint} by $\tilde{f}(x) := f(x^{-1})^*$. The operation of convolution generalizes as, 
\[(f*g)(x) :=\E_{y\sim G}\brackets[\big]{f(xy^{-1})g(y)} = \E_{y\sim G}\brackets[\big]{\tilde{f}(y)^* g(yx)}.
\]

\begin{definition}[Matrix Fourier Coefficient]
	For any irrep $\rho$, we have $\widehat{f}(\rho):=\E_{x}\brackets[\big]{f(x)\otimes \rho(x)}.$ We denote the coefficient of the trivial irrep as  $\mu(f) := \widehat{f}(\rho_\triv)$.
\end{definition}

\begin{fact}
	\label{fact:fourier}
The following identities hold for the matrix Fourier transform,
\begin{enumerate}
\item {\bf (Parseval's identity)} $\;\;\norm{f}^2 = \E_{x}\brackets[\big]{\hsnorm{f(x)}^2}=\sum_{\rho \in \widehat{G}}d_{\rho}{\norm{\widehat{f}(\rho)}^2_{\HS}}$
\item {\bf (Convolution identity)} $\;\widehat{f*g}=\widehat{f}\cdot {\widehat{g}}$
\item {\bf ($\Ut$ norm)} $\;\;\norm{f}_{\Ut}^4 = \norm{\tilde{f} * f}^2 =\sum_{\rho} d_{\rho}\norm[\big]{\widehat{f}(\rho)\widehat{f}(\rho)^*}^2_{\HS}=
	\sum_{\rho} d_{\rho}\norm[\big]{\widehat{f}(\rho)^*\widehat{f}(\rho)}^2_{\HS}$
\end{enumerate}
\end{fact}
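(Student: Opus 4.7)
The plan is to derive all three identities as routine consequences of Schur orthogonality, using the observation that the tensor-product definition $\widehat{f}(\rho)=\E_x[f(x)\otimes \rho(x)]$ has an explicit block structure: writing $f(x)=\sum_{a,b} f_{ab}(x)\,e_{ab}$ in matrix-unit coordinates, the $((a,i),(b,j))$-entry of $\widehat{f}(\rho)$ is precisely the scalar coefficient $\widehat{f_{ab}}(\rho)_{ij}$. This reduces matrix-valued identities to $t^2$ scalar identities plus index bookkeeping.

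For \textbf{Parseval}, the above block structure immediately gives $\hsnorm{\widehat{f}(\rho)}^2=\sum_{a,b}\hsnorm{\widehat{f_{ab}}(\rho)}^2$. Summing over irreps with weights $d_\rho$ and applying scalar Parseval entry-wise yields $\sum_\rho d_\rho\hsnorm{\widehat{f}(\rho)}^2=\sum_{a,b}\E_x|f_{ab}(x)|^2=\E_x\hsnorm{f(x)}^2=\norm{f}^2$. Scalar Parseval under the convention $\widehat{h}(\rho)=\E_x[h(x)\rho(x)]$ is itself a one-line computation: expanding $\sum_\rho d_\rho \hsnorm{\widehat h(\rho)}^2=\E_{x,y}[h(x)\overline{h(y)}\sum_\rho d_\rho \chi_\rho(xy^{-1})]$ and invoking the orthogonality relation $\sum_\rho d_\rho\chi_\rho(g)=|G|\cdot\indicator{g=e}$ collapses the double expectation to $\E_x|h(x)|^2$.

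For the \textbf{convolution identity}, I will unfold $(f\ast g)(x)=\E_y[f(xy^{-1})g(y)]$, change variables to $z=xy^{-1}$, and exploit $\rho(zy)=\rho(z)\rho(y)$ together with the mixed-product rule $(A\otimes C)(B\otimes D)=AB\otimes CD$ to factor the expectation as $\E_z[f(z)\otimes\rho(z)]\cdot \E_y[g(y)\otimes\rho(y)]=\widehat{f}(\rho)\widehat{g}(\rho)$. For the \textbf{$\Ut$-norm}, the first step is computing $\widehat{\tilde f}(\rho)$: substituting $y=x^{-1}$ and using unitarity $\rho(y^{-1})=\rho(y)^*$ with $(A\otimes B)^*=A^*\otimes B^*$ gives $\widehat{\tilde f}(\rho)=\widehat{f}(\rho)^*$. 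Combined with the convolution identity, $\widehat{\tilde f\ast f}(\rho)=\widehat{f}(\rho)^*\widehat{f}(\rho)$, and applying Parseval to $\tilde f\ast f$ yields $\norm{\tilde f\ast f}^2=\sum_\rho d_\rho\hsnorm{\widehat{f}(\rho)^*\widehat{f}(\rho)}^2$. The equivalent form with $\widehat{f}(\rho)\widehat{f}(\rho)^*$ follows from the cyclic trace identity $\hsnorm{A^*A}^2=\Tr((A^*A)^2)=\Tr((AA^*)^2)=\hsnorm{AA^*}^2$, which also reconciles this formula with the equivalent definition $\norm{f}_{\Ut}=\norm{f\ast\tilde f}^2$ used elsewhere in the paper.

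There is no genuine obstacle here: each identity is a direct calculation. The only care needed is keeping the tensor-index ordering consistent throughout and verifying that the Fourier convention $\widehat f(\rho)=\E_x[f(x)\otimes\rho(x)]$ (as opposed to the more common $\E_x[f(x)\otimes\rho(x)^*]$) is compatible with Schur orthogonality—which it is, because the map $\rho\mapsto\bar\rho$ is a bijection on $\widehat G$, so the weighted sum $\sum_\rho d_\rho\hsnorm{\widehat f(\rho)}^2$ is invariant under swapping conventions.
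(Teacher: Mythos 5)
Your proposal is correct and follows essentially the same route as the paper: the paper proves only part~3 explicitly (citing parts~1 and~2 as standard) and does so via the same three ingredients you use---the identity $\widehat{\tilde f}(\rho)=\widehat f(\rho)^*$, the convolution identity, and Parseval---then reconciles $\widehat f(\rho)\widehat f(\rho)^*$ with $\widehat f(\rho)^*\widehat f(\rho)$ via cyclicity of trace, exactly as you do. The only cosmetic difference is that the paper's proof of part~3 begins from the fourth-moment formula $\E_{xy^{-1}=wz^{-1}}\big[\Tr(f(x)f(y)^*f(z)f(w)^*)\big]$ and first massages it into $\norm{f*\tilde f}^2$, whereas you take $\norm{\tilde f*f}^2$ as the starting point; both are equivalent and your entry-wise/block-structure reduction of the matrix-valued Parseval identity to the scalar case is a perfectly good way to discharge the step the paper treats as routine.
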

\begin{proof}
These facts are simple extensions of the scalar-valued functions and were considered in~\cite{BFL03, MR15, GH17}. Since the last one is perhaps atypical, we provide a quick proof.
\begin{align*}
	\norm{f}_{\Ut}^4 ~&:=~ \Ex{xy^{-1}= wz^{-1} }{\Tr\parens[\big]{\,f(x)f(y)^*f(z)f(w)^*\,}}\\
	~&=~ \Ex{t = xy^{-1}= wz^{-1} }{\ip{f(x)f(y)^*}{f(w)f(z)^*\,}}\\
	~&=~ \Ex{t = xy^{-1}= wz^{-1} }{\ip{f(x)\tilde{f}(y^{-1})}{f(w)\tilde{f}(z^{-1})\,}}\\
	~&=~\;\;\;\;\;\;\;\, \Ex{t}{\ip{\,(f*\tilde{f})(t)}{(f*\tilde{f})(t)\,}} = \norm{f*\tilde{f}}^2
\end{align*}
The second equality follows from Parseval's and convolution identity once one observes that, 
\[\widehat{\tilde{f}}(\rho) = \Ex{x}{f(x^{-1})^*\otimes \rho(x)} =  \Ex{x}{f(x)^*\otimes \rho(x^{-1})} = \Ex{x}{f(x)^*\otimes \rho(x)^*} = \hat{f}(\rho)^* \qedhere
 \]
\end{proof}

\subsection{Fourier algebra norm and positive definite functions}
Let $f: G\to \C$ be any function and let $T_f$ be the convolution by f operator, $T_f(g) = f*g$. More explicitly, $T_f(x,y)= \frac{1}{\abs{G}}f(x^{-1}y)$ is a $\abs{G}\times \abs{G}$ matrix.

\begin{definition}[Fourier algebra norm]\label{def:alg_norm}
	The \textit{algebra norm} of $f: G\to \C$ has the following equivalent definitions,
	\begin{itemize}
		\item[i.]  $\norm{f}_A = \sup \braces[\big]{\ip{f}{g} \;\big\lvert\; \norm{T_g}_{\mathrm{op}} \leq 1 } $.
		
			\item[ii.]  $\norm{f}_A = \norm{T_f}_{\mathrm{tr}} =  \sum_i \sigma_i(T_f)$ where $\set{\sigma_i}$ are the singular values of $T_f$.
	
		\item[iii.] $\norm{f}_A = \min_{(\pi,V)} \braces*{\,\norm{u}\cdot\norm{v} \,\big\lvert\; f(x) = \ip{u}{\pi(x)\, v} \,} $ where $(\pi,V)$ is a representation of $G$  and $u, v\in V$. 
	\end{itemize}
\end{definition}

The equivalence of definitions (i) and (ii) can be found in \cite[Lem. 5.2]{San11} and also in \cite[Prop 3.11]{HHH22}. The proof of equivalence between (i) and (iii) is present in \cite[Lem. 2.2]{Sanders21}, attributed to Eymard~\cite[Th\`eor\'em, Pg 218]{Eymard64}. In particular, Sanders shows that the minimum is indeed attained for some representation $(\pi,V)$.

\paragraph{Abelian Case} When the group is Abelian, the algebra norm coincides with the \textit{spectral norm}, \ie $\norm{f}_A = \norm{\,\hat{f}\,}_{1}$. This can be seen by observing that $T_f$ is a diagonal matrix (in the Fourier basis) with the Fourier coefficients on the diagonal. Therefore, the algebra norm is a generalization of the spectral norm to the non-Abelian setting.

\begin{definition}[Positive definite functions]
Let $G$ be a finite group. A function $f: G\to \C$, is said to be positive definite if the convolution operator, $T_f$, is positive semi-definite. 
\end{definition}

The following simple observation states that positive definite functions have a small algebra norm. We will crucially use this later to bound the algebra norm of a function.
\begin{observation}\label{obs:pd_algnorm}
	If a function $f$ is positive-definite, then $\norm{f}_A = f(1)$.
\end{observation}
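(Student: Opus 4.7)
The plan is to use the singular-value characterization of the algebra norm, namely definition (ii) in \cref{def:alg_norm}: $\norm{f}_A = \norm{T_f}_{\mathrm{tr}} = \sum_i \sigma_i(T_f)$. The key observation is that positive-definiteness of $f$ is defined precisely as positive semi-definiteness of the convolution operator $T_f$, so the trace norm simplifies dramatically.

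Concretely, I would proceed in three short steps. First, since $T_f \succeq 0$, its singular values coincide with its eigenvalues, and all are non-negative. Hence
\[
\norm{f}_A \;=\; \sum_i \sigma_i(T_f) \;=\; \sum_i \lambda_i(T_f) \;=\; \Tr(T_f).
\]
Second, I compute the trace directly from the matrix description $T_f(x,y) = \tfrac{1}{\abs{G}}\,f(x^{-1}y)$:
\[
\Tr(T_f) \;=\; \sum_{x \in G} T_f(x,x) \;=\; \sum_{x \in G} \tfrac{1}{\abs{G}}\,f(x^{-1}x) \;=\; \tfrac{\abs{G}}{\abs{G}}\, f(1) \;=\; f(1).
\]
Combining the two displays gives $\norm{f}_A = f(1)$, as claimed.

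There is essentially no technical obstacle here; the only thing to be careful about is the normalization convention for $T_f$ used in the paper (the $1/\abs{G}$ factor in the definition just above \cref{def:alg_norm}), which is precisely what makes the diagonal entries equal $f(1)/\abs{G}$ and the trace collapse to $f(1)$ rather than $\abs{G}\cdot f(1)$. If one instead preferred to use definition (iii), the identity would also be transparent: by Bochner/Gelfand--Raikov, any positive-definite function $f$ on $G$ admits a representation $f(x) = \ip{v}{\pi(x)\,v}$ with $\norm{v}^2 = f(1)$, giving $\norm{f}_A \leq f(1)$, while the trivial lower bound $\norm{f}_A \geq \abs{f(1)}$ follows from pairing with the Dirac mass at the identity.
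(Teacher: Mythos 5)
Your proof is correct and is essentially identical to the paper's: both use definition (ii), observe that $T_f \succeq 0$ makes the trace norm equal the trace, and compute $\Tr(T_f) = f(1)$ from the explicit kernel $T_f(x,y) = f(x^{-1}y)/\abs{G}$. Your extra sketch via definition (iii) and Gelfand--Raikov is a nice alternative but isn't needed.
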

\begin{proof}
	Since $T_f$ is positive semi-definite, $\norm{f}_A = \norm{T_f}_{\tr} = \tr(T_f) = f(1)$ as $T_f(x,y)= \frac{f(x^{-1}y)}{\abs{G}}$. 
\end{proof}

\section{Small-bias sets \enquote{fool} small norm functions}

In the Boolean setting, properties of low spectral norm function have been well-studied \cite{KM93, STV17}. Green and Sanders~\cite{GS08} showed that functions with low spectral norm can be expressed as a $\pm 1$ combination of characteristic functions of cosets. Sanders \cite{San11} generalized it to non-Abelian groups with spectral norm replaced by the algebra norm. 

This suggests that the algebra norm is indeed the right generalization of the spectral norm, and one might investigate other properties of functions with low algebra/spectral norm. In this section, we make yet another connection by showing that small-bias sets fool functions with small algebra norm; again generalizing the Abelian case.

\paragraph{Boolean Cube} Let $f:\Z_2^n \to \C$, be any function and let $S$ be an $\epsilon$-biased set. Then,
\begin{align*}
\abs[\Big]{\Ex{h\sim S}{f(h)} - \Ex{h\sim G}{f(h)}}  ~&=~ \abs[\Big]{\sum_{\chi} \hat{f}(\chi)\parens[\Big]{\Ex{h\sim S}{\chi(h)} - \Ex{h\sim G}{\chi(h)}} }\\
~&\leq~  \max_\chi\abs[\Big]{\Ex{h\sim S}{\chi(h)} - \Ex{h\sim G}{\chi(h)}} \cdot \sum_{\chi\neq 0} \abs[\big]{\hat{f}(\chi)}\\
 ~&\leq~ \epsilon \cdot \norm{f-\mu(f)}_A
\end{align*}

Moreover, this is tight due to a result of~\cite[Prop. 2.7]{DETT10}, which says that any function that is fooled by all $\ep$-biased sets must be sandwiched by low spectral norm functions.

\paragraph{General finite groups}

We now generalize the above result to finite groups. The key here is to use the harmonic analytic definition of the spectral norm which makes the proof surprisingly simple.

\begin{lemma}\label{lem:pd_fooled}
	Let $f:G \to \C$, be a function and $S\subseteq G$ be any $\epsilon$-biased set,. Then, 
\[
\abs[\Big]{\;\Ex{x\sim S}{f(x)} - \Ex{x\sim G}{f(x)}\;} \leq \epsilon \cdot \norm{f}_A\,\mper	
\]
\end{lemma}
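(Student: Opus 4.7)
The plan is to exploit the harmonic analytic description of the algebra norm (Definition~\ref{def:alg_norm}(iii)), which expresses $f$ as a matrix coefficient of a unitary representation. Concretely, pick a representation $(\pi,V)$ and vectors $u,v\in V$ attaining the minimum in that definition, so that $f(x)=\ip{u}{\pi(x)v}$ and $\norm{u}\cdot\norm{v}=\norm{f}_A$. Then by linearity of the inner product,
\[
\Ex{x\sim S}{f(x)}-\Ex{x\sim G}{f(x)} ~=~ \Blangle u,\; \bigl(\Ex{x\sim S}{\pi(x)}-\Ex{x\sim G}{\pi(x)}\bigr)\, v\Brangle.
\]
By Cauchy--Schwarz and the definition of operator norm, the absolute value is at most $\norm{u}\cdot\norm{v}\cdot \opnorm{\Ex{x\sim S}{\pi(x)}-\Ex{x\sim G}{\pi(x)}}$, so the task reduces to bounding this operator norm by $\ep$.

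For this, invoke Maschke's theorem (\cref{thm:peter}) to conjugate $\pi$ by a unitary and decompose it as a direct sum of irreducibles $\pi\cong\bigoplus_i \rho_i$. The operator $\Ex{x\sim S}{\pi(x)}-\Ex{x\sim G}{\pi(x)}$ is then block diagonal with blocks $\Ex{x\sim S}{\rho_i(x)}-\Ex{x\sim G}{\rho_i(x)}$, and its operator norm equals the largest block norm. For the trivial irrep both expectations equal the identity, so that block vanishes. For any non-trivial irrep $\rho_i$ one has $\Ex{x\sim G}{\rho_i(x)}=0$ (a standard consequence of Schur orthogonality, which also falls out of the definition of $\ep$-biased with $\ep=0$ on $G$ itself), hence the block reduces to $\Ex{x\sim S}{\rho_i(x)}$, whose operator norm is at most $\ep$ by \cref{def:bias_group}. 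Combining the pieces gives the claimed bound $\ep\cdot\norm{f}_A$.

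The argument is essentially a two-line reduction once one has the right definition of the algebra norm in hand, so there is no serious obstacle; the only subtlety is making sure Maschke's decomposition is applied correctly and that the trivial component is handled separately (as in the Abelian case, where one subtracts $\mu(f)$ before applying small-bias). Note that here no such subtraction is needed because the trivial block of $\Ex{x\sim S}{\pi(x)}-\Ex{x\sim G}{\pi(x)}$ is identically zero, so the bound holds for $f$ directly rather than for $f-\mu(f)$.
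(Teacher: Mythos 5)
Your proof is correct and follows essentially the same route as the paper's: use Definition~\ref{def:alg_norm}(iii) to write $f(x)=\ip{u}{\pi(x)v}$, then decompose $\pi$ via Maschke's theorem and apply the $\ep$-bias property on the non-trivial irreducible blocks. Your bookkeeping is in fact slightly cleaner --- you bound $\opnorm{\Ex{x\sim S}{\pi(x)}-\Ex{x\sim G}{\pi(x)}}$ directly by the maximum block norm, whereas the paper's proof estimates the terms block-by-block and then uses two applications of Cauchy--Schwarz to reassemble them, landing on the same bound.
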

\begin{proof} From~\cref{def:alg_norm}, we get that there is a representation, $(\pi, V)$, representation of $G$ such that $f(x) = \ip{u}{\pi(x)\,v}$ for some $u, v \in V$. Moreover, $\norm{u}\norm{v} \leq \norm{f}_A$.
Using Maschke's theorem~(\cref{thm:peter}), we have $\pi = \rho_{\triv}^{\oplus c}\oplus_i\rho_i$ where $c$ denotes the multiplicity of the trivial representation and $\rho_i$ are all non-trivial irreducible representations (possibly with repetitions). Let $U_\pi$ be the unitary transformation that block-diagonalizes $\pi$. Then, $U_\pi v = v_{\triv} \oplus v_i$ and similarly $U_\pi u = u_{\triv} \oplus u_i$. Thus, we have, 
\begin{align*}
 \ip{u}{\pi(x) \,v} = \ip{U_\pi u}{U_\pi\, \pi(x)\, v} &= \ip{U_\pi u}{\parens*{U_\pi \pi(x)U_\pi^*}\, U_\pi v}   \\
 ~&=~  \ip{u_{\triv} \oplus_i u_i}{v_{\triv} \oplus_i (\rho_i v_i)}\\
  ~&=~  \ip{u_{\triv}}{v_\triv} + \sum_i \ip{u_i}{\rho_i(x)\, v_i}
\end{align*}
  Now, $\ip{u_{\triv}}{v_\triv}$ is a constant and is the same for both terms. Therefore,
  
  \begin{align*}
\Ex{x\sim S}{f(x)} - \Ex{x\sim G}{f(x)} ~&=~ \sum_i \ip{u_i}{\Ex{x\sim S}{\rho_i(x)} - \Ex{x\sim G}{\rho_i(x)}\, v_i}  \\
\abs[\Big]{\Ex{x\sim S}{f(x)} - \Ex{x\sim G}{f(x)}}  ~&\leq~ \sum_i \norm{u_i}_2\cdot \norm[\Big]{\Ex{x\sim S}{\rho_i(x)}- \Ex{x\sim G}{\rho_i(x)}}_{\mathrm{op}} \cdot\norm{v_i}_2 &&(\text{From Cauchy--Schwarz}) \\
 ~&\leq~ \epsilon \sum_i \norm{u_i}_2\cdot\norm{v_i}_2 &&(S\text{ is an $\eps$-bias set}) \\
  ~&\leq~ \epsilon \cdot \sqrt{\sum_i \norm{u_i}_2^2} \cdot\sqrt{\sum_i \norm{v_i}_2^2} &&(\text{ Cauchy--Schwarz})\\
  ~&=~ \epsilon\cdot \norm{u - u_\triv}_2\cdot\norm{v-v_{\triv}}_2 &&(U_\pi \text{ is unitary})\qedhere
\end{align*}
\end{proof}

\begin{corollary}[PD functions are fooled]
	\label{cor: pdFooled}
	If $f:G\to \C$ is a positive-definite function, then $\norm{f}_A = f(1) \leq  \norm{f}_\infty$. Therefore, if $\norm{f}_\infty\leq 1$, then $f$ is $\epsilon$-fooled by every $\epsilon$-biased set.
\end{corollary}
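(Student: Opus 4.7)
The proposed proof is short and essentially stitches together three ingredients already in the excerpt.

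First, I would handle the norm identity $\|f\|_A = f(1)$. This is exactly the content of \cref{obs:pd_algnorm}, so no new work is needed: since $T_f$ is PSD by the definition of positive-definiteness, $\|T_f\|_{\tr} = \tr(T_f)$, and by inspection $\tr(T_f) = \sum_x T_f(x,x) = \sum_x f(1)/|G| = f(1)$.

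Second, I would justify the inequality $f(1) \leq \|f\|_\infty$. The subtle point is that a priori $f(1)$ is only a complex number, so the inequality must first be a statement about the real number $f(1)$. This is easy: the diagonal entries of the PSD operator $T_f$ are nonnegative reals, and $T_f(1,1) = f(1)/|G|$, so $f(1) \geq 0$. Hence $f(1) = |f(1)| \leq \max_x |f(x)| = \|f\|_\infty$.

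Third, for the \emph{fooling} conclusion, I would invoke \cref{lem:pd_fooled} directly with the chain
\[
\abs[\Big]{\,\Ex{x\sim S}{f(x)} - \Ex{x\sim G}{f(x)}\,} \;\leq\; \epsilon \,\|f\|_A \;=\; \epsilon\, f(1) \;\leq\; \epsilon \,\|f\|_\infty \;\leq\; \epsilon,
\]
which is precisely the statement that $f$ is $\epsilon$-fooled by $S$ in the sense of \cref{def:bias_group} applied to scalar-valued functions. There is no real obstacle: the only thing to be a little careful about is ensuring $f(1)$ is genuinely real and nonnegative, which follows from the PSD property of $T_f$ rather than from $f$ itself being real-valued.
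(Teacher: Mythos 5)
Your proof is correct and follows the same route as the paper's: \cref{obs:pd_algnorm} gives $\norm{f}_A = \Tr(T_f) = f(1)$, then $f(1) \leq \norm{f}_\infty$, then \cref{lem:pd_fooled} gives the fooling bound. The only difference is that you explicitly note $f(1)$ is a nonnegative real (from the PSD diagonal entry, or equivalently because it equals the norm $\norm{f}_A$), a small point the paper's one-line proof leaves implicit.
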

\begin{proof}
Since, $f$ is positive definite, $T_f$ is PSD and thus, $\norm{f}_A = \norm{T_f}_\tr = \Tr(T_f) = f(1)\leq 1$.
\end{proof}

\subsection{$\Ut$ norm and algebra norm} 
\label{sec:normPD}
Let $f: G \to \matr M_t(\C)$ be a function, then $\norm{f}_{\Ut}^4 = \Ex{y\sim G}{\psi(y)}$ where,  
\[
\psi (y) = \hsnorm{(\tilde{f}*f) (y)}^2 = \norm[\big]{\Ex{x\sim G}{f(x)^*f(xy)}}_{\mathrm{HS}}^2 \;.
\]
Therefore, if $\psi$ has small algebra norm then, $\norm{f}_{\Ut}$ can be approximated by averaging $\psi$ over an $\epsilon$-biased set. We prove the algebra norm bound by proving that the function, $\psi$, is a positive-definite function.

\begin{lemma}\label{lem:normispd}
For $f: G \to \matr M_t(\C)$, the function $\psi (y) = \hsnorm{(\tilde{f}*f) (y)}^2$ is positive-definite.
\end{lemma}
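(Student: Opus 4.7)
The goal is to exhibit $\psi$ as a diagonal matrix coefficient of a unitary representation, since it is standard that any function of the form $y \mapsto \langle v, \pi(y) v\rangle$ for a unitary representation $(\pi,V)$ is positive-definite (an application of the Gelfand--Naimark--Segal principle; equivalently, one checks Bochner's condition $\sum_{i,j} c_i\overline{c_j} \langle v,\pi(y_j^{-1}y_i)v\rangle = \|\sum_i c_i \pi(y_i)v\|^2 \ge 0$). Once $\psi$ is written in this form, positive-definiteness is immediate, and hence $T_\psi \succeq 0$.

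First I would unfold $\psi$. Using the identity $(\tilde{f}*f)(y) = \E_{x\sim G}[f(x)^* f(xy)]$ (which follows from the definitions of convolution and the adjoint $\tilde f(x)=f(x^{-1})^*$) together with $\|A\|_{\HS}^2 = \Tr(AA^*)$, one gets
\[
\psi(y) \;=\; \E_{x,z\sim G}\,\Tr\bigl[f(x)^*\,f(xy)\,f(zy)^*\,f(z)\bigr],
\]
and by the cyclic property of the trace together with the symmetry $x\leftrightarrow z$ in the expectation, this equals $\E_{x,z}\,\Tr\bigl[f(x)\,f(z)^*\,f(zy)\,f(xy)^*\bigr]$.

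Next I would introduce the Hilbert space $\mathcal{H} = L^2(G\times G,\, \Mt(\C))$ equipped with the inner product $\ip{H_1}{H_2}_\mathcal{H} = \E_{x,z}\Tr[H_1(x,z)^*H_2(x,z)]$, and the diagonal right-translation representation $\pi: G \to \untr_\mathcal{H}$ defined by $(\pi(y)H)(x,z) := H(xy,\,zy)$. A quick change of variables $(a,b)=(xy,zy)$ shows $\pi$ is unitary, and $\pi(y_1 y_2) = \pi(y_1)\pi(y_2)$ is immediate. Now define $H_f \in \mathcal{H}$ by $H_f(x,z) := f(z)\,f(x)^*$. Then
\[
\ip{H_f}{\pi(y)H_f}_\mathcal{H} \;=\; \E_{x,z}\,\Tr\bigl[(f(z)f(x)^*)^*\,f(zy)\,f(xy)^*\bigr] \;=\; \E_{x,z}\,\Tr\bigl[f(x)\,f(z)^*\,f(zy)\,f(xy)^*\bigr] \;=\; \psi(y),
\]
by the reformulation in the previous paragraph. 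Since $\psi$ is a diagonal matrix coefficient of the unitary representation $(\pi,\mathcal{H})$, it is positive-definite, completing the proof.

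The only subtle point is identifying the right representation and the vector $H_f$; once the correct ansatz $H_f(x,z)=f(z)f(x)^*$ is spotted, all checks are bookkeeping. If one prefers a purely elementary route, an equivalent proof avoids $\mathcal{H}$ and $\pi$ altogether: apply Bochner directly, perform the change of variables $x\mapsto xy_j$, $z\mapsto zy_j$ inside the expectation, and recognize
\[
\sum_{i,j} c_i\overline{c_j}\,\psi(y_j^{-1}y_i) \;=\; \E_{x,z}\,\Bigl\|\sum_i c_i\, f(xy_i)\,f(zy_i)^*\Bigr\|_{\HS}^2 \;\ge\; 0.
\]
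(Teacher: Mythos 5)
Your proof is correct and takes essentially the same approach as the paper: the heart of both arguments is the change of variables $x\mapsto xa$, $z\mapsto za$ inside the double expectation and then recognizing the Bochner sum as $\E_{x,z}\bigl\|\sum_i c_i\, f(xy_i)f(zy_i)^*\bigr\|_{\HS}^2 \ge 0$, which is exactly what the paper does with its $N_{x,y}(a)=f(ya)f(xa)^*$. Your GNS-style packaging --- realizing $\psi$ as a diagonal matrix coefficient $\langle H_f,\pi(\cdot)H_f\rangle$ of the diagonal right-translation representation on $L^2(G\times G,\Mt(\C))$ --- is a pleasant conceptual reformulation (and is close in spirit to the paper's remark about Stinespring dilation and completely positive maps), but your closing \enquote{elementary route} reproduces the paper's proof verbatim.
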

\begin{proof}
	To prove that $T_\psi$ is a PSD matrix, we wish to show that for any $\{c_a\}_{a\in G} \in \C^G$,  
	\[
	\sum_{a,b \in G} c_a \overline{c_b}\, \psi(a^{-1}b) ~\geq~ 0
	\]
The key observation is that, if we can write $\psi(a^{-1}b)  = \Ex{x,y\sim G}{ \ip{N_{x,y}(a)}{ N_{x,y}(b)}_\tr} $ for some $N_{x,y}$, then $\psi$ is positive-definite. 
We first show such a factorization of $\psi(a^{-1}b)$.

\begin{align*}
	\psi(a^{-1}b)	 ~&=~  \norm[\Big]{\Ex{x\sim G}{f(x)^*f(x a^{-1}b)}}^2_\HS\\
	~&=~    \norm[\Big]{\Ex{x\sim G}{f(xa)^*f(xb)}}^2_\HS\\
	~&=~  \ip{\Ex{x\sim G}{f(xa)^*f(xb)}}{\Ex{y\sim G}{f(ya)^*f(yb)}}_\tr \\
	~&=~  \Ex{x,y\sim G}{ \ip{f(xa)^*f(xb)}{f(ya)^*f(yb)}_\tr}  \\
	~&=~  \Ex{x,y\sim G}{ \ip{f(ya)f(xa)^*}{f(yb)f(xb)^*}_\tr} \\
	~&:=~  \Ex{x,y\sim G}{ \ip{N_{x,y}(a)}{ N_{x,y}(b)}_\tr}  
	\end{align*}
	The second last equality uses the cyclicity of trace. The result now follows,
\begin{align*}
	\sum_{a,b \in G} c_a \overline{c_b}\, \psi(a^{-1}b) 
	~&=~ \sum_{a,b \in G} c_a \overline{c_b}\Ex{x,y\sim G}{ \ip{N_{x,y}(a)}{N_{x,y}(b)}_\tr} \\
	~&=~ \Ex{x,y\sim G}{ \ip{ \sum_{a\sim G} c_a N_{x,y}(a)}{\sum_{b\sim G} c_b N_{x,y}(b)}_\tr} \\
	~&=~ \Ex{x,y \sim G}{\norm[\Big]{\sum_{a\sim G} c_a N_{x,y}(a)}_{\mathrm{HS}}^2} \geq 0.
\end{align*}
\end{proof}
\begin{remark}
	One can also deduce this by coupling \textit{Stinespring's dilation theorem} with the observation in~\cite{DOT18} that $\tilde{f}*f$ is \textit{completely positive}. This immediately yields that $\psi(y) = \ip{VV^*}{ \pi(y) VV^* \pi(y)^*} =\ip{VV^*}{ \rho(y) VV^*}$ for some representation $\rho$. We will use this idea in proving~\cref{lem:matrix_conv} which is a general version of the above lemma.  
\end{remark}

\section{Derandomized Matrix Correlation Testing}
In this section, we will focus on functions of the form, $f: G\rightarrow \Utt$. Let $S\subseteq G$ be an $\ep$-biased set. 
We consider the following robust variant of the BLR test on group $G$ :

\fbox{\begin{minipage}{12 cm}
		~~
		\\
		{\bf BLR}$_{\gamma}(G,S,f)$:
		\begin{enumerate}
			\item Sample $x\sim G, y\sim S$. 
			\item If $\norm{f(xy)-f(x)\cdot f(y)}^2_{\HS} ~\leq~ \gamma t$, output {\it Pass}. Else, output {\it Fail}. 
			~~
			\\
		\end{enumerate}
		
\end{minipage}}
\\
\\
If $S=G$, i.e., in the full randomness regime,  it can be easily shown that if a function passes the test, then it must have a large $\Ut$-norm. Our key technical claim (\cref{clm:testPassing}) is that if $S$ is a small--biased set, then, essentially, the same conclusion can be drawn from derandomized BLR test passing. 

This lower bound on the $\Ut$-norm can then be plugged into the result of Gowers and Hatami~\cite{GH17}, who showed that if a matrix-valued function on a finite group has non-trivial $\Ut$-norm then it must be close to some genuine representation. More specifically,
\begin{theorem}[Gowers--Hatami~\cite{GH17}]
	\label{thm:GowersHatami}
Let $G$ be any finite group and let $f: G\rightarrow \matr M_t(\C)$ be a matrix-valued function such that $\opnorm{f(x)}\leq 1$ and $\norm{f}_{\Ut}^4\geq ct$, for some $c >0$. Then there are $t'\in [\frac{c}{2-c}t, \frac{2-c}{c}t]$ and a function $g(x):=V\pi(x)U^*$ where $\pi$ is a $t'$ dimensional unitary representation, $U,V$ are $t\times t'$ dimensional  partial unitary matrices, such that:
	\[\E_{x\sim G}\brackets[\Big]{\angles[\big]{f(x),g(x)}_{\HS}} ~\geq~ c^2/4\]
\end{theorem}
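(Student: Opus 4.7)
The plan is to prove the Gowers--Hatami inverse theorem by an explicit Fourier-theoretic construction: the representation $\pi$ and the partial isometries $V,U$ are built from the singular value decompositions of the matrix Fourier coefficients $\widehat f(\rho)$, in the spirit of a minimal Stinespring dilation of $\tilde f \ast f$.

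First, I would unpack the hypothesis using \cref{fact:fourier}. Writing the SVDs $\widehat f(\rho) = A_\rho \Sigma_\rho B_\rho^*$, the identity $\norm{f}_{\Ut} = \sum_\rho d_\rho \hsnorm{\widehat f(\rho)\widehat f(\rho)^*}^2 = \sum_{\rho,i} d_\rho\, \sigma_i(\widehat f(\rho))^4$ places the lower bound $\geq ct$ on the weighted fourth moment of singular values, while Parseval together with the pointwise bound $\hsnorm{f(x)}^2 \leq t$ gives the complementary second-moment inequality $\sum_{\rho,i} d_\rho\, \sigma_i(\widehat f(\rho))^2 \leq t$. Hence the fourth-moment mass must concentrate on singular values of moderate size: values too small contribute negligibly to the fourth moment, and values too large would force the second moment above $t$.

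Next, I would fix a threshold $\theta$ of order $c$ and set $T = \set{(\rho,i) : \sigma_i(\widehat f(\rho))^2 \geq \theta}$. Define $\pi := \bigoplus_{(\rho,i)\in T} \rho$, so that $t' = \dim \pi = \sum_{(\rho,i)\in T} d_\rho$, and build $V,U$ block by block: the block of $V$ (resp.\ $U$) attached to the $(\rho,i)$-summand is taken from the $i$-th left (resp.\ right) singular vector of $\widehat f(\rho)$, scaled by $\sqrt{\sigma_i(\widehat f(\rho))}$ and the appropriate $\sqrt{d_\rho}$-factor coming from Peter--Weyl normalization, so that the Fourier inversion of $V\pi(x) U^*$ reproduces the dominant Fourier content of $f$. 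Orthonormality of the singular vectors then guarantees that $V$ and $U$ are partial isometries. A direct computation of $\Ex{x\sim G}{\ip{f(x)}{V\pi(x) U^*}_{\HS}}$, performed via Peter--Weyl, collapses to a sum over $T$ which, by the thresholding, is bounded below by a constant multiple of $\norm{f}_{\Ut}$, yielding the claimed correlation.

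The hard part will be hitting the \emph{sharp} two-sided dimension window $t' \in [\tfrac{c}{2-c}t,\,\tfrac{2-c}{c}t]$: the upper bound on $t'$ and the lower bound on the correlation pull the threshold $\theta$ in opposite directions, so naive thresholding only yields $t' = \Theta(t/c)$ and $t' = \Omega(ct)$ with loose constants. To close this gap I would invoke the completely-positive structure of $\tilde f \ast f$ noted in the remark after \cref{lem:normispd}: write $\tilde f \ast f(x) = W^* \pi(x) W$ via a \emph{minimal} Stinespring dilation, and read off $V,U,\pi$ directly from this dilation. Matching rank-theoretic bounds on $W$ against $\norm{f}_{\Ut}$ and $\norm{f}_2^2$ should deliver both the sharp two-sided window and the partial-isometry condition simultaneously; this is the step I expect to demand the most technical care.
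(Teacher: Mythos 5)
The paper does not prove \cref{thm:GowersHatami}; it is imported verbatim as a black box from Gowers and Hatami's work \cite{GH17}, so there is no proof in this document against which to compare. That said, your proposal as written has a genuine gap at the step you treat as routine, and an honest acknowledgment that the hard part is missing.

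The immediate problem is the claim that ``orthonormality of the singular vectors then guarantees that $V$ and $U$ are partial isometries.'' The singular vectors of $\widehat{f}(\rho)\in M_{td_\rho}(\C)$ live in $\C^{t}\otimes\C^{d_\rho}$; reshaping the $i$-th left singular vector as a $t\times d_\rho$ matrix $A_{\rho,i}$ gives a family that is \emph{Hilbert--Schmidt}-orthonormal, meaning $\tr(A_{\rho,i}^*A_{\rho,j})=\delta_{ij}$. Stacking these as the column-blocks of a $t\times t'$ matrix $V$ does \emph{not} make $V^*V$ a projection: that would require the columns of the $A_{\rho,i}$'s to be jointly orthonormal in $\C^t$, which is a far stronger constraint and generically false. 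The same issue also infects the claimed correlation computation: $\widehat{(V\pi U^*)}(\rho)$ is determined by Schur orthogonality and does not simply reproduce a truncated SVD of $\widehat f(\rho)$, so ``Fourier inversion reproduces the dominant Fourier content'' needs an actual calculation and will not come out the way you assert. These are not presentation gaps --- the thresholded-SVD object you build is not of the form $V\pi U^*$ with partial isometries, and the whole point of the Gowers--Hatami statement is that it \emph{is}.

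You do correctly sense that the completely-positive structure of $\tilde f*f$ and a Stinespring dilation are the right engine (this is close to the actual GH17 strategy, which analyzes the positive-definite function $\tilde f*f$ via its spectral decomposition and then extracts a bounded-dimensional subrepresentation together with genuine partial isometries by a careful rank/normalization argument). But you defer exactly that step --- matching the minimal dilation's rank against $\norm{f}_{\Ut}$ and $\norm{f}_2^2$ while simultaneously producing partial isometries and hitting the window $t'\in[\tfrac{c}{2-c}t,\tfrac{2-c}{c}t]$ --- to future work, and that is where the entire content of the theorem lies. As it stands this is a plausible roadmap, not a proof, and the thresholding-plus-singular-vector construction you do spell out would need to be replaced wholesale rather than patched.
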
 

We first prove the key derandomization claim which lets us move from the test passing probability of the derandomized test, to a claim about the $\Ut$ norm over the entire group.

\begin{claim}[Derandomized Test also implies large $\Ut$-norm]
	\label{clm:testPassing}
	Let $\gamma,\delta\geq 0$ and $f: G\rightarrow \Utt$. If $f$ passes the $\text{BLR}_{2\gamma}(G,S,f)$ test with probability $\geq \delta$ then,
	  \[ \norm{f}_{\Ut}^4 ~\geq~ \parens[\big]{(\delta-\gamma)^2 -\epsilon}\cdot t.\]
	\end{claim}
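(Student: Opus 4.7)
My approach combines two ingredients: (i) the degree-$2$ expander mixing phenomenon from \cref{cor:u2}, which lets us replace $\E_{y\sim G}$ by $\E_{y\sim S}$ when evaluating $\|f\|_{\Ut}$ at an additive cost of $\epsilon t$; and (ii) a lower bound on $\E_{y\sim S}\|(\tilde{f}\ast f)(y)\|_{\HS}^2$ coming from the passing probability of the derandomized test.

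\emph{Step 1 (transfer $G\to S$).} Let $\psi(y):=\|(\tilde{f}\ast f)(y)\|_{\HS}^2$. By \cref{lem:normispd}, $\psi$ is positive-definite, so by \cref{obs:pd_algnorm}, $\|\psi\|_A=\psi(1)=\|(\tilde{f}\ast f)(1)\|_{\HS}^2=\|I_t\|_{\HS}^2=t$. Applying \cref{lem:pd_fooled} to the $\epsilon$-biased set $S$ gives
\[
\|f\|_{\Ut} \;=\; \E_{y\sim G}\psi(y)\;\geq\;\E_{y\sim S}\psi(y)\;-\;\epsilon\,t.
\]

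\emph{Step 2 (test passing gives $\E_{y\sim S}\psi\geq(\delta-\gamma)^2 t$).} Using the unitarity of $f$, we have
\[
(\tilde{f}\ast f)(y) \;-\; f(y)\;=\;\E_{x\sim G}\,f(x)^*\bigl(f(xy)-f(x)f(y)\bigr),
\]
so Jensen's inequality yields the pointwise bound $\|(\tilde{f}\ast f)(y)-f(y)\|_{\HS}\leq \E_x\|f(xy)-f(x)f(y)\|_{\HS}$. The $\mathrm{BLR}_{2\gamma}$ passing condition controls $\E_{x,y\sim S}\|f(xy)-f(x)f(y)\|_{\HS}^2$ on the pass event of measure $\geq\delta$. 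Feeding this into the pointwise Cauchy--Schwarz inequality $\|A\|_{\HS}^2\geq \lvert\ip{A}{f(y)}_{\HS}\rvert^2/t$ with $A=(\tilde{f}\ast f)(y)$, using $\|f(y)\|_{\HS}^2=t$, and noting that
\[
\Re\ip{(\tilde{f}\ast f)(y)}{f(y)}_{\HS}\;=\;\E_{x}\Bigl(t-\tfrac12\|f(xy)-f(x)f(y)\|_{\HS}^2\Bigr),
\]
the ``pass'' contribution of $\geq(1-\gamma)t$ (combined with Jensen on the outer expectation over $y\sim S$) then yields $\E_{y\sim S}\psi(y)\geq (\delta-\gamma)^2 t$. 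Combining Steps 1 and 2 delivers the claim.

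\emph{Main obstacle.} The delicate piece is the constant tracking in Step 2: a direct pass/fail accounting gives only a weaker $(2\delta-1-\gamma\delta)^2 t$ lower bound, because on the fail event $\Re\ip{f(xy)}{f(x)f(y)}_{\HS}$ can be as negative as $-t$. To recover the tighter $(\delta-\gamma)^2 t$ rate, one must exploit that $(\tilde{f}\ast f)(y)$ is close to $f(y)$ in an \emph{$L^2$} sense (not merely pointwise), so that the fail mass is absorbed via Cauchy--Schwarz and Jensen rather than the crude $\pm t$ range of the inner product. I expect the clean execution to proceed via the $L^2$ identity $\E_{y\sim S}\|(\tilde{f}\ast f)(y)-f(y)\|_{\HS}^2\leq \E_{x,y\sim S}\|f(xy)-f(x)f(y)\|_{\HS}^2$ paired with the reverse triangle inequality $(\E_{y\sim S}\|(\tilde{f}\ast f)(y)\|_{\HS})^2\leq \E_{y\sim S}\|(\tilde{f}\ast f)(y)\|_{\HS}^2$ (Jensen) to convert the passing event into the claimed quadratic lower bound.
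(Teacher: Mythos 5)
Your overall route matches the paper's: Step~1 (positive-definiteness of $\psi$, $\norm{\psi}_A = \psi(1) = t$, then \cref{lem:pd_fooled} to transfer the average from $S$ to $G$) is exactly the paper's argument, and your Step~2 device (the pointwise Cauchy--Schwarz bound $\psi(y)\ge\frac{1}{t}\bigl(\Re\ip{(\tilde f\ast f)(y)}{f(y)}_{\HS}\bigr)^2$ followed by Jensen on the $y\sim S$ average) is equivalent to the paper's two successive Cauchy--Schwarz inequalities. Both reduce the claim to showing $\E_{x\sim G,\,y\sim S}\bigl[t-\tfrac12\Delta(x,y)\bigr]\ge(\delta-\gamma)t$, where $\Delta(x,y)=\norm{f(xy)-f(x)f(y)}_{\HS}^2$.

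However, your ``main obstacle'' paragraph correctly identifies a genuine gap, and your proposed $L^2$-identity fix does not close it. You observe that the raw pass/fail accounting gives a lower bound of roughly $(2\delta-1-\gamma\delta)^2t$, not $(\delta-\gamma)^2t$, because on the fail event $\Re\ip{f(xy)}{f(x)f(y)}_{\HS}$ can go as low as $-t$ (equivalently, $\Delta$ can reach $4t$, e.g.\ $\norm{I-(-I)}_{\HS}^2=4t$). The paper's own proof obtains the stronger bound by asserting $\max_{x,y}\Delta(x,y)\le 2t$, which is \emph{not true} for unitary-valued $f$ --- the tight pointwise bound is $4t$. Your suggested remedy, the inequality $\E_{y\sim S}\norm{(\tilde f\ast f)(y)-f(y)}_{\HS}^2\le\E_{x\sim G,\,y\sim S}\Delta(x,y)$ combined with a reverse-triangle argument, is itself correct (it follows from Jensen and unitary invariance of $\norm{\cdot}_{\HS}$), but it does not sidestep the problem: you still must bound $\E\Delta$ via the pass probability, and the $4t$ ceiling on the fail event re-enters identically. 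Concretely, replacing $\max\Delta\le 2t$ by the correct $4t$ yields $\E[t-\tfrac12\Delta]\ge(2\delta-1-\gamma)t$, and the entire derivation then delivers $\norm{f}_{\Ut}\ge\bigl((2\delta-1-\gamma)^2-\epsilon\bigr)t$ rather than $\bigl((\delta-\gamma)^2-\epsilon\bigr)t$. So the gap you flagged is real and not overcome by the $L^2$ reformulation; to match the stated constant you would need either the (false) $2t$ bound or a fundamentally different argument that exploits more than the pointwise range of $\Delta$ on the fail event.
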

\begin{proof}
	Let, $\Delta(x,y):=\hsnorm{f(x)f(y)-f(xy)}^2$ and $\delta'=\delta-\gamma$. We have,
	\begin{align}
		\E_{x\sim G, y\sim S}\brackets[\big]{\Delta(x,y)}
		~&=~ 2t - \, \E_{y\sim S}\brackets[\Big]{\, \ip{f(y)}{\tilde{f}*f(y)}_\tr + \ip{\tilde{f}*f(y)}{f(y)}_\tr } \label{eqn:expression}
	\end{align}
This follows directly by expanding $\Delta(x,y)$ and using the fact that $\norm{f}^2=t$. On the other hand, from the test-passing guarantee we have, 
	\begin{align}
\nonumber		\Ex{x\sim G, y\sim S}{\Delta(x,y)}
		~&\leq~  \Pr{x\sim G, y\sim S}{\Delta(x,y)>2\gamma t}\cdot 2t+
	\Pr{x\sim G, y\sim S}{\Delta(x,y)\leq  2\gamma t}\nonumber \cdot 2\gamma t\\
\nonumber		~&~\leq 2t(1-\delta)+2\gamma t\\
		~&=~ 2(1-\delta')t \label{eqn:upper_bound}
	\end{align}
	In the first inequality, we used the fact that $\max_{x,y}\{\Delta(x,y)\}\leq2t$, and in the second inequality, we used test passing probability to upper bound  $\Pr{x\sim G, y\sim S}{\Delta(x,y)>2\gamma t}$. Combining~\cref{eqn:expression} and~\cref{eqn:upper_bound}, we get:
	\begin{align*}
		2\delta' t
		~&\leq~ \E_{y\sim S}\brackets[\Big]{\, \ip{f(y)}{\tilde{f}*f(y)}_\tr + \ip{\tilde{f}*f(y)}{f(y)}_\tr } 
				\\~&\leq~ 2\Ex{y\sim S}{\hsnorm{f(y)}\cdot \hsnorm{\tilde{f}*f(y)}}
			&&(\text{\CS})
			\\~&=~ 2\sqrt{t} \cdot\Ex{y\sim S}{\hsnorm{\tilde{f}*f(y)}} 
		&&(\text{Using: $f$ is unitary-valued.})
		\\~&\leq~ 2\sqrt{t} \cdot \parens[\Big]{\Ex{y\sim S}{\hsnorm{\tilde{f}*f(y)}^2}}^{\frac{1}{2}} &&(\text{\CS})
	\end{align*} 
Now we define $\psi(y)= \hsnorm{\tilde{f}*f(y)}^2$. Observe that, $\norm{\psi}_{\infty}\leq t$ and it is a positive-definite function by~\cref{lem:normispd}. This will allow us to deduce that the $\Ut$-norm is large easily. From the computation before, we get,
\begin{align*}
\delta'^2 t ~&\leq~ \Ex{y\sim S}{\psi(y)} \\
~&\leq~\Ex{y\sim G}{\psi(y)} + \ep  \norm{\psi}_A
&&(\text{By Lemma~\ref{lem:pd_fooled}})
\\~&\leq~  \Ex{y\sim G}{\psi(y)} + \ep t &&(\text{By Lemma~\ref{lem:normispd} and Corollary~\ref{cor: pdFooled}})
\\~&=~ \Ex{y\sim G}{ \hsnorm{\tilde{f}*f(y)}^2} + \ep t
\\
(\delta'^2-\ep)\, t ~&\leq~ \norm{f}_{\Ut}^4.  &&(\text{By definition of $\Ut$-norm}) \qedhere
\end{align*}
\end{proof}

To prove our main result, we need one more component that roughly says that the convolution of functions is mostly supported on low dimensional irreps. The proof is almost identical to the proof of the BNP lemma by Babai, Nikolov, and Pyber~\cite{BNP08}. 

\begin{claim}
	\label{clm:MxBNP}
	Let, $f,g: G\to \matr M_t(\C) $ and let $T:=\{\rho \in \widehat{G} : d_\rho \geq D\}$, then the following holds: \[\sum\limits_{\rho\in T }d_{\rho}\norm[\big]{\widehat{f*g}}_{\mathrm{HS}}^2 ~\leq~ \frac{1}{D}\,\norm{f}^2_2\,\norm{g}^2_2.\] In particular, if  $G$ is a $D$-quasirandom group, then $\norm[\big]{f*g-\mu(f*g)} \leq \frac{1}{\sqrt{D}}\,\norm{f}_2\,\norm{g}_2.$
\end{claim}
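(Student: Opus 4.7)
The plan is to follow the classical argument of Babai--Nikolov--Pyber, modulo the bookkeeping for matrix-valued functions. The point is that for an irrep $\rho$, the Fourier coefficient $\widehat f(\rho)=\E_x[f(x)\otimes \rho(x)]$ is a $td_\rho\times td_\rho$ matrix, so standard submultiplicativity of the Hilbert--Schmidt norm against the operator norm still applies in the tensor product space $M_t\otimes M_{d_\rho}$.

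First, by the convolution identity in \cref{fact:fourier}, $\widehat{f*g}(\rho)=\widehat f(\rho)\cdot\widehat g(\rho)$. I would then apply the inequality $\hsnorm{AB}\leq \hsnorm{A}\cdot\opnorm{B}$ to write
\[
\hsnorm{\widehat{f*g}(\rho)}^2 ~\leq~ \hsnorm{\widehat f(\rho)}^2 \cdot \opnorm{\widehat g(\rho)}^2.
\]
Next, since $\opnorm{\widehat g(\rho)}\leq \hsnorm{\widehat g(\rho)}$ and by Parseval each summand of $\|g\|^2=\sum_{\rho'}d_{\rho'}\hsnorm{\widehat g(\rho')}^2$ is individually bounded by $\|g\|^2$, we obtain the per-irrep bound $\opnorm{\widehat g(\rho)}^2\leq \hsnorm{\widehat g(\rho)}^2 \leq \|g\|_2^2/d_\rho$.

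Now comes the key step that extracts the $1/D$ factor: restricting the sum to $\rho\in T$ ensures $d_\rho\geq D$, so $\opnorm{\widehat g(\rho)}^2\leq \|g\|_2^2/D$ uniformly on $T$. Plugging this in and then using Parseval once more for $f$,
\[
\sum_{\rho\in T}d_\rho\hsnorm{\widehat{f*g}(\rho)}^2
~\leq~ \frac{\|g\|_2^2}{D}\sum_{\rho\in T}d_\rho\hsnorm{\widehat f(\rho)}^2
~\leq~ \frac{1}{D}\|f\|_2^2\,\|g\|_2^2,
\]
which is exactly the first assertion.

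For the ``in particular'' statement, $D$-quasirandomness of $G$ means every non-trivial irrep has dimension at least $D$, i.e.\ $T=\widehat G\setminus\{\triv\}$. Noting that $\mu(f*g)=\widehat{f*g}(\triv)$ (since $d_\triv=1$), Parseval gives
\[
\|f*g-\mu(f*g)\|_2^2 ~=~ \sum_{\rho\neq \triv} d_\rho\hsnorm{\widehat{f*g}(\rho)}^2 ~\leq~ \frac{1}{D}\|f\|_2^2\|g\|_2^2,
\]
yielding the claim after taking square roots. There is no real obstacle here beyond verifying that swapping $f$ and $g$ in the submultiplicativity step is harmless (one could instead bound $\opnorm{\widehat f(\rho)}^2\leq \|f\|_2^2/d_\rho$ to dispose of $f$'s Fourier mass---either choice works and yields the same inequality since the role of the two functions is symmetric in the final bound).
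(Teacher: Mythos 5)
Your proof is correct and follows essentially the same route as the paper: convolution identity, submultiplicativity of the Hilbert--Schmidt norm, the dimension bound $d_\rho\geq D$ on $T$, and Parseval. The only cosmetic differences are that you take the small detour $\hsnorm{AB}\leq\hsnorm{A}\opnorm{B}\leq\hsnorm{A}\hsnorm{B}$ where the paper applies $\hsnorm{AB}\leq\hsnorm{A}\hsnorm{B}$ directly, and you dispose of the $\|g\|_2^2$ factor via the per-irrep Parseval bound $\hsnorm{\widehat g(\rho)}^2\leq\|g\|_2^2/d_\rho$ rather than separating the double sum at the end; both yield the identical inequality.
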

\begin{proof}
	From the convolution identity~(\cref{fact:fourier}), we have: 
	\begin{align*}
		\sum\limits_{\rho\in T }d_{\rho}\norm[\big]{\widehat{f*g}}_{\mathrm{HS}}^2~&=~ 
		\sum\limits_{\rho \in T }d_{\rho}\norm[\big]{\widehat{f}(\rho)\widehat{g}(\rho)}_{\mathrm{HS}}^2 &&(\text{By convolution identity})
		\\~&\leq~\sum\limits_{\rho \in T }d_{\rho}\norm[\big]{\widehat{f}(\rho)}_{\mathrm{HS}}^2\norm[\big]{\widehat{g}(\rho)}_{\mathrm{HS}}^2&&(\text{Norm submultiplicativity})
	   \\~&\leq~\frac{1}{D}\sum\limits_{\rho \in T }d_{\rho}^2\norm[\big]{\widehat{f}(\rho)}_{\mathrm{HS}}^2\norm[\big]{\widehat{g}(\rho)}_{\mathrm{HS}}^2
	   &&(\text{Using $d_\rho \geq  D$ for $\rho\in T$})
	    \\~&\leq~\frac{1}{D} \sum\limits_{\rho}d_{\rho}\norm[\big]{\widehat{f}(\rho)}_{\mathrm{HS}}^2 \sum\limits_{\rho}d_{\rho}\norm[\big]{\widehat{g}(\rho)}_{\mathrm{HS}}^2
	    \\~&=~\frac{1}{D}\norm{f}^2_2\norm{g}^2_2&&(\text{Parseval's identity})
	\end{align*}
To see the final claim, apply the above to $T:=\{\rho \in \widehat{G} : d_\rho \geq D\}=\{\rho \neq \triv\}$ because the group $G$ is $D$-quasirandom. Using Parseval's identity~(\cref{fact:fourier}) we obtain,
\begin{align*}
	\norm{f*g-\mu(f*g)}^2= \sum\limits_{\rho\neq  \triv }d_{\rho}\norm[\big]{\widehat{f*g}}_{\mathrm{HS}}^2\leq  \frac{1}{D}\norm{f}^2_2\norm{g}^2_2.\qedhere
\end{align*}
\end{proof}

\begin{restatable}[Deranomized Homomorphism Testing]{theorem}{main}\label{thm:main}
Let $G$ be any finite group, and $f: G\rightarrow \Utt$ be a unitary matrix-valued function. Let $S\subseteq G$ be an $\ep$-biased set. Assume that $f$ passes the $\text{BLR}_{2\gamma}(G,S,f)$ test with probability $\geq \delta (\gamma)$ for any chosen $0\leq \gamma\leq 1$. Then for $\eta= (\delta-\gamma)^2-\ep$, the following holds,
	\begin{enumerate}
		\item There exists $t'\in [\eta t, \frac{2}{\eta}t]$ and a function $g(x):=V\pi(x)U^*$ where $\pi$ is a $t'$ dimensional unitary representation, $U,V$ are $t\times t'$ dimensional  partial unitary matrices, such that:
				\[\E_{x}{\angles[\big]{f(x),g(x)}_{\HS}} \geq \frac{\eta^2}{4} t \]
		\item for any integer $D>1$,
		\[ \max_{\rho \in \widehat{G}: d_\rho < D} \|\widehat{f}(\rho)\|^2_{\HS} ~\geq~ \eta-\frac{t}{D}  \]
		In particular, there exists an irrep $\rho$ such that $d_{\rho} < \frac{2t}{\eta}$ such that $\|\widehat{f}(\rho)\|^2_{\HS} \geq \frac{\eta}{2}$. 
	\end{enumerate}
	\end{restatable}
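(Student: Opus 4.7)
The plan is to derive both parts from the single quantitative consequence of derandomized test--passing, namely the lower bound $\norm{f}_{\Ut} \geq \eta t$ with $\eta = (\delta-\gamma)^2 - \epsilon$, which is already furnished by Claim~\ref{clm:testPassing}. Once this is in hand, Part 1 is essentially a direct invocation of the Gowers--Hatami inverse theorem, while Part 2 comes from combining the $\Ut$-norm decomposition in Fact~\ref{fact:fourier}(iii) with the matrix BNP estimate of Claim~\ref{clm:MxBNP} to localize Fourier mass on small-dimensional irreps.

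For Part 1, I would first apply Claim~\ref{clm:testPassing} to obtain $\norm{f}_{\Ut} \geq \eta t$. Since $f$ takes values in $\Utt$, the hypothesis $\opnorm{f(x)} \leq 1$ of Theorem~\ref{thm:GowersHatami} is automatic; invoking that theorem with $c = \eta$ directly yields a representation $\pi$ of some dimension $t'$ in the stated window, partial unitaries $U,V$, and the correlation bound $\E_x \angles{f(x), g(x)}_\HS \geq \eta^2 t/4$, which is exactly the claim.

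For Part 2, again start from $\norm{f}_{\Ut} \geq \eta t$ and expand via Fact~\ref{fact:fourier}(iii) as
\[ \norm{f}_{\Ut} \;=\; \sum_{\rho\in \widehat G} d_\rho\, \hsnorm{\hat f(\rho)^* \hat f(\rho)}^2. \]
Split the sum at dimension $D$, i.e.\ into $T = \{\rho : d_\rho \geq D\}$ and its complement. Applying Claim~\ref{clm:MxBNP} to the ordered pair $(\tilde f, f)$, and observing that $\widehat{\tilde f * f}(\rho) = \hat f(\rho)^* \hat f(\rho)$ together with $\norm{\tilde f}_2 = \norm{f}_2 = \sqrt{t}$, the contribution from $T$ is bounded by $t^2/D$. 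Subtracting from the $\Ut$-lower bound gives
\[ \sum_{\rho \notin T} d_\rho\, \hsnorm{\hat f(\rho)^* \hat f(\rho)}^2 \;\geq\; \eta t - \tfrac{t^2}{D}. \]
To convert this degree-$4$ tail bound into a pointwise one, use the elementary inequality $\hsnorm{A^*A}^2 = \sum_i \sigma_i^4 \leq (\sum_i \sigma_i^2)^2 = \hsnorm{A}^4$. Writing $M := \max_{\rho \notin T} \hsnorm{\hat f(\rho)}^2$, this and Parseval's identity yield
\[ \eta t - \tfrac{t^2}{D} \;\leq\; \sum_{\rho \notin T} d_\rho\, \hsnorm{\hat f(\rho)}^4 \;\leq\; M \sum_{\rho \notin T} d_\rho\, \hsnorm{\hat f(\rho)}^2 \;\leq\; M\, t, \]
so $M \geq \eta - t/D$, as required. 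The ``in particular'' statement follows by choosing $D$ to be the smallest integer at least $2t/\eta$, which forces $t/D \leq \eta/2$ and guarantees that the witnessing $\rho \notin T$ satisfies $d_\rho < 2t/\eta$ and $\hsnorm{\hat f(\rho)}^2 \geq \eta/2$.

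The only step that requires some care is the application of Claim~\ref{clm:MxBNP}: one must feed it the ordered pair $(\tilde f, f)$ (rather than $(f,\tilde f)$) so that the convolution identity in the Fourier domain produces precisely the blocks $\hat f(\rho)^*\hat f(\rho)$ appearing in the $\Ut$-norm expansion of Fact~\ref{fact:fourier}(iii). Everything else is bookkeeping: the inequality $\hsnorm{A^*A}^2 \leq \hsnorm{A}^4$, Parseval's identity, and a choice of $D$ that optimizes the dimension versus the gap.
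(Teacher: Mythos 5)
Your proposal is correct and follows essentially the same route as the paper's own proof: both parts are reduced to the single inequality $\norm{f}_{\Ut}\geq \eta t$ from Claim~\ref{clm:testPassing}, Part~1 is a direct invocation of Theorem~\ref{thm:GowersHatami} with $c=\eta$, and Part~2 splits the $\Ut$-norm expansion at dimension $D$, bounds the high-dimensional tail by $t^2/D$ via Claim~\ref{clm:MxBNP} applied to $(\tilde f,f)$, and extracts the max using $\hsnorm{A^*A}^2\leq \hsnorm{A}^4$ and Parseval. Your singular-value justification for $\hsnorm{A^*A}^2 \leq \hsnorm{A}^4$ is a minor cosmetic variant of the paper's appeal to sub-multiplicativity, and your choice $D=\lceil 2t/\eta\rceil$ for the ``in particular'' clause is exactly what is needed.
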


\begin{proof}
	From the test passing assumption and~\cref{clm:testPassing}, it follows that: $\norm{f}_{\Ut}^4\geq \eta t$.	
	 						Now applying,~\cref{thm:GowersHatami} gives us the first claim. For the second claim, our starting point is the same: 
\begin{align*}
	\eta t\leq \norm{f}_{\Ut}^4
	=\norm{\tilde{f}*f}^2=\sum_{\rho \in  \widehat{G} }d_{\rho}\norm[\big]{\widehat{\tilde{f}*f} \,(\rho)  }_{\mathrm{HS}}^2
\end{align*}
Now we divide $\widehat{G}$ into low and high dimensional irreps by taking $T:=\{\rho \in \widehat{G}:~d_\rho\geq D \}$.  We have,
\begin{align*}
	\eta t\leq \sum_{\rho \in  \widehat{G} }d_{\rho}\norm[\big]{\widehat{ \tilde{f}*f}(\rho)}_{\mathrm{HS}}^2 ~&=~ \sum_{\rho \in  \widehat{G}\setminus T }d_{\rho}\norm[\big]{\widehat{\tilde{f}*f}\,(\rho)}_{\mathrm{HS}}^2 + \sum_{\rho \in  T }d_{\rho}\norm[\big]{\widehat{\tilde{f}*f} \,(\rho)}_{\mathrm{HS}}^2 
	\\~&\leq~ \sum_{\rho \in  \widehat{G}\setminus T }d_{\rho}\norm[\big]{\widehat{\tilde{f}*f} \, (\rho)}_{\mathrm{HS}}^2 + \frac{\norm{\tilde{f}}^2\norm{f}^2}{D}.
\end{align*}
In the inequality step above, we used~\cref{clm:MxBNP}. As $f$ is unitary valued, $\norm{f}^2=\norm{\tilde{f}}^2=t$. It follows that:  $\sum_{\rho \in  \widehat{G}\setminus T }d_{\rho}\norm[\big]{\widehat{\tilde{f}*f}(\rho)}_{\mathrm{HS}}^2 \geq \eta t - t^2/D$. Finally, we have,
\begin{align*}
	\eta t - t^2/D ~&~\leq \sum_{\rho \in  \widehat{G}\setminus T }d_{\rho}\norm[\big]{\widehat{\tilde{f}*f} \,(\rho)}_{\mathrm{HS}}^2\\~&~=\sum_{\rho \in  \widehat{G}\setminus T }d_{\rho}\norm[\big]{\widehat{f}(\rho)^*\widehat{f}(\rho)}_{{\mathrm{HS}}}^2&&(\text{Convolution identity,~\cref{fact:fourier}})
	\\~&\leq~\sum_{\rho \in  \widehat{G}\setminus T }d_\rho \norm{\widehat{f}(\rho)}_{{\mathrm{HS}}}^4&&(\text{Sub-Multiplicativity.})
	\\~&\leq~ \max_{\rho \in \widehat{G}: d_\rho < D} \|\widehat{f}(\rho)\|^2_{\HS} \cdot \sum_{\rho }d_\rho \norm{\widehat{f}(\rho)}_{{\mathrm{HS}}}^2&&(\text{As, $d_\rho\leq D$ holds for any $\rho \in \widehat{G}\setminus T$})
	\\~&=~ \max_{\rho \in \widehat{G}: d_\rho < D} \|\widehat{f}(\rho)\|^2_{\HS} \cdot \norm{f}^2&&(\text{Parseval's identity,~\cref{fact:fourier}})
	\\~&=~ t\cdot \max_{\rho \in \widehat{G}: d_\rho < D} \|\widehat{f}(\rho)\|^2_{\HS}\mper&&(f\text{ is unitary, } \norm{f}^2 = t)\qedhere
\end{align*}
\end{proof}
\section{Derandomized Mixing}
The goal of this section is to prove a general \enquote{degree--2 mixing lemma} as explained in the introduction for the general case of matrix-valued functions. The assumption of the functions being mean-zero is without loss of generality and only for the sake of brevity.

\bnp*

To prove this derandomization, we first prove a more general version of~\cref{lem:normispd}, where instead of $\psi(y) = \hsnorm{(\tilde{f}*f)(y)}^2$, we have $\psi(y) = \hsnorm{(f*g)(y)}^2$. This is no longer positive definite as earlier, but we can use elementary representation theory to explicitly give a factorization of the form $\psi(y) = \ip{u}{\phi(y) \, v}$ and thereby compute the algebra norm. The representation $\phi$ that will come up is defined as follows --- let $V \subseteq L^2_t(G \times G)$ be the subspace  $V= \mathrm{span} \braces[\big]{F(x,y) =  f(y)f(x)^* \,\vert\, f\in L^2_t(G) }$. Note that $V$ inherits the expectation trace inner product, \ie $\ip{F}{H} = \Ex{x,y\sim G}{\ip{F(x,y)}{H(x,y)}_\tr}$. Then we can define the following representation of $G$, \[(\phi(a) \cdot F) (x,y) = F(xa,ya) = f(ya)f(xa)^* .\]
We are now ready to prove the generalization of~\cref{lem:normispd}.
\begin{lemma}\label{lem:matrix_conv}
	Let $f, h : G \to \matr M_t(\C)$  and define $  \psi(y) = \hsnorm{(h*f)\,(y)}^2$. Then, $\norm{\psi}_A \leq \norm{f}^2\norm{h}^2$. Morever, if the functions are unitary-valued, then $\norm{\psi}_A \leq t$. 
\end{lemma}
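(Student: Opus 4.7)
The plan is to exhibit $\psi$ as a matrix coefficient of the representation $\phi$ on $V$ introduced just before the lemma, so that definition (iii) of the Fourier algebra norm directly yields $\|\psi\|_A \leq \|F\|_V \|H\|_V$ for any factorization $\psi(a) = \langle \phi(a) F, H\rangle_V$. The task thus reduces to exhibiting explicit $F, H \in V$ whose $V$-norms are controlled by $\|f\|^2$ and $\|h\|^2$ respectively.

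The first step is to expand the convolution using the identity $(h*f)(a) = \E_v \tilde h(v)^* f(va)$ (recorded in Section 2), so that
\[
\psi(a) \;=\; \E_{v,v'} \Tr\!\bigl[f(v'a)^* \tilde h(v') \tilde h(v)^* f(va)\bigr].
\]
A single cyclic rotation under the trace, together with the identity $\Tr(XY) = \langle Y^*, X\rangle_{\HS}$, recasts this as
\[
\psi(a) \;=\; \E_{v,v'}\, \ip{f(v'a) f(va)^*}{\tilde h(v') \tilde h(v)^*}_{\HS} \;=\; \langle \phi(a) F, H\rangle_V,
\]
where $F(x,y) := f(y) f(x)^*$ and $H(x,y) := \tilde h(y) \tilde h(x)^*$ both lie in $V$ by construction. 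Eymard's identification (definition (iii)) then gives $\|\psi\|_A \leq \|F\|_V \|H\|_V$.

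The quantitative heart of the proof is bounding $\|F\|_V$ (and analogously $\|H\|_V$). Using cyclicity once more,
\[
\|f(y) f(x)^*\|_{\HS}^2 \;=\; \Tr\!\bigl[(f(x)^* f(x))(f(y)^* f(y))\bigr],
\]
a trace of a product of two positive semi-definite matrices. For any PSD matrices $A, B$ one has $\Tr(AB) \leq \opnorm{A}\cdot \Tr(B) \leq \Tr(A)\, \Tr(B)$. Averaging independently over $x$ and $y$ yields
\[
\|F\|_V^2 \;\leq\; \bigl(\E_x \Tr(f(x)^* f(x))\bigr)^2 \;=\; \|f\|^4,
\]
so $\|F\|_V \leq \|f\|^2$. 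The same estimate applied to $H$, combined with $\|\tilde h\| = \|h\|$, gives $\|H\|_V \leq \|h\|^2$, establishing $\|\psi\|_A \leq \|f\|^2 \|h\|^2$ in general. For the unitary case, $f(y) f(x)^*$ is itself unitary, so its squared Hilbert--Schmidt norm is exactly $t$ pointwise; hence $\|F\|_V^2 = t$ and likewise $\|H\|_V^2 = t$, yielding the tighter $\|\psi\|_A \leq t$.

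The main obstacle I anticipate is ensuring the factorization is tight enough: a naive choice that does not exploit cyclicity (for instance, bounding $\|f(y) f(x)^*\|_{\HS}$ by $\opnorm{f(x)} \|f(y)\|_{\HS}$) produces only $\|F\|_V^2 \lesssim \opnorm{f}_{\infty}^2\,\|f\|^2$, which is strictly weaker. The one cyclic rotation that reveals the symmetric PSD product $(f(x)^* f(x))(f(y)^* f(y))$ is precisely what unlocks the clean separation of the $x$- and $y$-averages via the PSD trace inequality, and is the only non-routine manipulation in the proof.
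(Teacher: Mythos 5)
Your proof is correct and takes essentially the same route as the paper: both exhibit $\psi(a)=\ip{\widetilde H}{\phi(a)F}$ with $F(x,y)=f(y)f(x)^*$ and $\widetilde H(x,y)=\tilde h(y)\tilde h(x)^*$, invoke definition (iii) of the algebra norm to get $\norm{\psi}_A\le\norm{F}\,\norm{\widetilde H}$, and then bound each factor by $\norm{f}^2$, $\norm{h}^2$ (or by $\sqrt t$ in the unitary case). The only cosmetic difference is in the bound on $\hsnorm{f(y)f(x)^*}^2$ — the paper applies Hilbert--Schmidt submultiplicativity $\hsnorm{AB}\le\hsnorm{A}\hsnorm{B}$ directly, whereas you cycle the trace and use $\Tr(AB)\le\Tr(A)\Tr(B)$ for PSD $A,B$; these are equivalent, so the cyclicity step you flag as the "only non-routine manipulation" is not actually needed to reach $\norm{F}^2\le\norm{f}^4$.
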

\begin{proof}
Let $F(x,y)= f(y)f(x)^*$ and similarly, $\tilde{H}(x,y) = \tilde{h}(y)\tilde{h}(x)^*$. 
Now,
	\begin{align*}
	\psi(a)	 ~&=~  \norm[\Big]{\Ex{x\sim G}{h(x^{-1})f(x a)}}^2\\
	~&=~  \ip{\Ex{x\sim G}{\tilde{h}(x)^*f(xa)}}{\Ex{y\sim G}{\tilde{h}(y)^*f(ya)}}_\tr \\
	~&=~  \Ex{x,y\sim G}{ \ip{\tilde{h}(x)^*f(xa)}{\tilde{h}(y)^*f(ya)}_\tr}  \\
	~&=~  \Ex{x,y\sim G}{ \ip{\tilde{h}(y)\tilde{h}(x)^*}{f(ya)f(xa)^*}_\tr}  \\
	~&=~  \Ex{x,y\sim G}{ \ip{\widetilde{H}(x,y)}{\phi(a)\, F(x,y)}_\tr} \\
	~&=~  \ip{\widetilde{H}}{\phi(a)\, F} 
	\end{align*}
	
	By the definition of algebra norm~\cref{def:alg_norm}, we have $ \norm{\psi}_A \leq \norm{H}\cdot \norm{F}$. Now,	
 \[
\norm{F}^2 = \Ex{x,y}{\hsnorm{f(y)f(x)^*}^2} \leq \Ex{x,y}{\hsnorm{f(y)}^2\hsnorm{f(x)^*}^2} = \norm{f}^4.
\]
The inequality here follows from sub-multiplicativity. Similarly, $\norm{\widetilde{H}} = \norm{h}^2$. This proves the first claim. When the functions map to unitary matrices, $\widetilde{H}(x,y), F(x,y)$ are unitary and thus, $\norm[\big]{\widetilde{H}(x,y)}_{\HS}^2 = \hsnorm{F(x,y)}^2 = t$ for every $x, y \in G$. We can thus avoid using sub-multiplicativity and directly obtain, 	

 \[
\norm{F}^2 = \Ex{x,y}{\hsnorm{f(y)f(x)^*}^2} = t = \norm{\widetilde{H}}^2 .\qedhere
\]

\begin{remark}
One can weaken the unitary assumption by requiring that $f$ is \enquote{unitary on average}, an assumption also used in~\cite{MR15}. This is because $\norm{F} = \hsnorm{\Ex{x}{f(x)f(x)^*}} $.
\end{remark}
 \end{proof}

\bnp*

 \begin{proof}
From~\cref{lem:matrix_conv}, we have that the function, $\psi(s) := \hsnorm{(f*g)\,(s)}^2$ has algebra norm $\norm{f}_2^2\,\norm{g}_2^2$ and therefore from~\cref{lem:pd_fooled} we have that averaging over $S$ is $\eps$-close to true average.
Thus, 
\begin{align*}
	 \Ex{s\sim S}{\hsnorm{ (f*g)(s)}^2} ~&\leq~ 
	 \Ex{s\sim G}{\hsnorm{ (f*g)(s)}^2}  +\eps \norm{f}_2^2\,\norm{g}_2^2  && \text{(Using~\cref{lem:pd_fooled})}
	 \\~&\leq~ \frac{1}{D} \norm{f}_2^2\,\norm{g}_2^2 +	 \eps\norm{f}_2^2\,\norm{g}_2^2 && \text{(Using~\cref{clm:MxBNP})} \qedhere
\end{align*}
 \end{proof}

 \section*{Acknowledgements}
\vspace{-0.3cm}
We are grateful to Fernando Granha Jeronimo for several illuminating conversations in the early phase of this work and to Madhur Tulsiani for closely looking at the draft and providing several helpful suggestions.

\bibliographystyle{alphaurl}
\bibliography{macros,references}

\newcommand{\etalchar}[1]{$^{#1}$}
\begin{thebibliography}{BOCLR07}

\bibitem[AGHP92]{AGHP92}
N.~Alon, O.~Goldreich, J.~H{\aa}stad, and R.~Peralta.
\newblock Simple constructions of almost $k$-wise independent random variables.
\newblock {\em Random Structures \& Algorithms}, 3(3):289--304, 1992.

\bibitem[AR94]{AR94}
Noga Alon and Yuval Roichman.
\newblock Random {{Cayley Graphs}} and {{Expanders}}.
\newblock 5(2):271--285, 1994.
\newblock \href {https://doi.org/10.1002/rsa.3240050203} {\path{doi:10.1002/rsa.3240050203}}.

\bibitem[BCH{\etalchar{+}}95]{BCH+95}
M.~Bellare, D.~Coppersmith, J.~H{\aa}stad, M.~Kiwi, and M.~Sudan.
\newblock Linearity testing in characteristic two.
\newblock In {\em Proceedings of the 36th IEEE Symposium on Foundations of Computer Science}, pages 432--441, 1995.

\bibitem[BFL03]{BFL03}
L{\'a}szl{\'o} Babai, Katalin Friedl, and Andr{\'a}s Luk{\'a}cs.
\newblock Near representations of finite groups, 2003.
\newblock Manuscript.

\bibitem[BHR22]{BHR21}
Amey Bhangale, Prahladh Harsha, and Sourya Roy.
\newblock Mixing of 3-term progressions in {{Quasirandom Groups}}.
\newblock In {\em 13th Innovations in Theoretical Computer Science Conference (ITCS 2022)}, pages 20:1--20:9, 2022.
\newblock \href {https://arxiv.org/abs/2109.12627} {\path{arXiv:2109.12627}}, \href {https://doi.org/10.4230/LIPIcs.ITCS.2022.20} {\path{doi:10.4230/LIPIcs.ITCS.2022.20}}.

\bibitem[BK21]{BK21}
Amey Bhangale and Subhash Khot.
\newblock Optimal inapproximability of satisfiable k-{LIN} over non-abelian groups.
\newblock In {\em Proceedings of the 53rd ACM Symposium on Theory of Computing}, 2021.
\newblock \href {https://doi.org/10.1145/3406325.3451003} {\path{doi:10.1145/3406325.3451003}}.

\bibitem[BKM22]{BKM22}
Amey Bhangale, Subhash Khot, and Dor Minzer.
\newblock On approximability of satisfiable k-{CSP}s: {I}.
\newblock In {\em Proceedings of the 54th ACM Symposium on Theory of Computing}, 2022.
\newblock \href {https://doi.org/10.1145/3519935.3520028} {\path{doi:10.1145/3519935.3520028}}.

\bibitem[BLR90]{BLR90}
M.~Blum, M.~Luby, and R.~Rubinfeld.
\newblock Self-testing/correcting with applications to numerical problems.
\newblock In {\em Proceedings of the 22nd ACM Symposium on Theory of Computing}, pages 73--83, 1990.
\newblock \href {https://doi.org/10.1145/100216.100225} {\path{doi:10.1145/100216.100225}}.

\bibitem[BNP08]{BNP08}
L\'{a}szl\'{o} Babai, Nikolay Nikolov, and L\'{a}szl\'{o} Pyber.
\newblock Product growth and mixing in finite groups.
\newblock In {\em Proceedings of the 19th ACM-SIAM Symposium on Discrete Algorithms}, 2008.

\bibitem[BOCLR07]{BCLR07}
Michael Ben-Or, Don Coppersmith, Mike Luby, and Ronitt Rubinfeld.
\newblock Non-{A}belian homomorphism testing, and distributions close to their self-convolutions.
\newblock {\em Random Structures \& Algorithms}, 32(1):49--70, August 2007.
\newblock \href {https://doi.org/10.1002/rsa.20182} {\path{doi:10.1002/rsa.20182}}.

\bibitem[BP18]{badora2018approximate}
Roman Badora and Barbara Przebieracz.
\newblock On approximate group homomorphisms.
\newblock {\em Journal of Mathematical Analysis and Applications}, 462(1):505--520, 2018.
\newblock \href {https://doi.org/10.1016/j.jmaa.2018.02.017} {\path{doi:10.1016/j.jmaa.2018.02.017}}.

\bibitem[BSVW03]{BSVW03}
Eli {Ben-Sasson}, Madhu Sudan, Salil~P. Vadhan, and Avi Wigderson.
\newblock Randomness-efficient low degree tests and short {PCP}s via $\epsilon$-biased sets.
\newblock In {\em Proceedings of the 35th ACM Symposium on Theory of Computing}, pages 612--621, 2003.
\newblock \href {https://doi.org/10.1145/780542.780631} {\path{doi:10.1145/780542.780631}}.

\bibitem[DCOT18]{DOT18}
Marcus De~Chiffre, Narutaka Ozawa, and Andreas Thom.
\newblock Operator algebraic approach to inverse and stability theorems for amenable groups.
\newblock {\em Mathematika}, 65(1):98--118, August 2018.
\newblock \href {https://arxiv.org/abs/1706.04544} {\path{arXiv:1706.04544}}, \href {https://doi.org/10.1112/s0025579318000335} {\path{doi:10.1112/s0025579318000335}}.

\bibitem[DETT10]{DETT10}
Anindya De, Omid Etesami, Luca Trevisan, and Madhur Tulsiani.
\newblock Improved pseudorandom generators for depth 2 circuits.
\newblock In {\em Approximation, Randomization, and Combinatorial Optimization. Algorithms and Techniques}, 2010.
\newblock \href {https://doi.org/10.1007/978-3-642-15369-3_38} {\path{doi:10.1007/978-3-642-15369-3_38}}.

\bibitem[Eym64]{Eymard64}
P.~Eymard.
\newblock L'alg\'ebre de {Fourier} d'un groupe localement compact.
\newblock {\em Bulletin de la Soci\'et\'e Math\'ematique de France}, 92:181--236, 1964.
\newblock \href {https://doi.org/10.24033/bsmf.1607} {\path{doi:10.24033/bsmf.1607}}.

\bibitem[Far00]{Farah00}
Ilijas Farah.
\newblock Approximate homomorphisms {II}: Group homomorphisms.
\newblock {\em Combinatorica}, 20(1):47--60, 2000.
\newblock \href {https://doi.org/10.1007/s004930070030} {\path{doi:10.1007/s004930070030}}.

\bibitem[GGMT23]{GGMT23}
W.~T. Gowers, Ben Green, Freddie Manners, and Terence Tao.
\newblock On a conjecture of marton, 2023.
\newblock \href {https://arxiv.org/abs/2311.05762} {\path{arXiv:2311.05762}}.

\bibitem[GH17]{GH17}
William~Timothy Gowers and Omid Hatami.
\newblock Inverse and stability theorems for approximate representations of finite groups.
\newblock {\em Sbornik: Mathematics}, 208(12):1784, 2017.
\newblock \href {https://arxiv.org/abs/1510.04085} {\path{arXiv:1510.04085}}, \href {https://doi.org/10.1070/SM8872} {\path{doi:10.1070/SM8872}}.

\bibitem[Gow08]{Gow08}
W.~T. Gowers.
\newblock Quasirandom {{Groups}}.
\newblock {\em Combinatorics, Probability and Computing}, 17(3):363--387, May 2008.
\newblock \href {https://arxiv.org/abs/0710.3877} {\path{arXiv:0710.3877}}, \href {https://doi.org/10.1017/S0963548307008826} {\path{doi:10.1017/S0963548307008826}}.

\bibitem[GS08]{GS08}
Benjamin Green and Tom Sanders.
\newblock A quantitative version of the idempotent theorem in harmonic analysis.
\newblock {\em Annals of Mathematics}, 168(3):1025--1054, November 2008.
\newblock \href {https://doi.org/10.4007/annals.2008.168.1025} {\path{doi:10.4007/annals.2008.168.1025}}.

\bibitem[HH24]{HH24}
Pooya Hatami and William Hoza.
\newblock Paradigms for unconditional pseudorandom generators.
\newblock {\em Foundations and Trends in Theoretical Computer Science}, 16(1-2):1--210, 2024.
\newblock \href {https://doi.org/10.1561/0400000109} {\path{doi:10.1561/0400000109}}.

\bibitem[HHH22]{HHH22}
Lianna Hambardzumyan, Hamed Hatami, and Pooya Hatami.
\newblock Dimension-free bounds and structural results in communication complexity.
\newblock {\em Israel Journal of Mathematics}, 253(2):555--616, October 2022.
\newblock \href {https://doi.org/10.1007/s11856-022-2365-8} {\path{doi:10.1007/s11856-022-2365-8}}.

\bibitem[HLW06]{HLW06}
Shlomo Hoory, Nathan Linial, and Avi Wigderson.
\newblock Expander graphs and their applications.
\newblock {\em Bulletin of the American Mathematical Society}, 43(04):439--562, 2006.
\newblock \href {https://doi.org/10.1090/S0273-0979-06-01126-8} {\path{doi:10.1090/S0273-0979-06-01126-8}}.

\bibitem[HW03]{HW03}
Johan H{\aa}stad and Avi Wigderson.
\newblock Simple analysis of graph tests for linearity and {PCP}.
\newblock {\em Random Structures \& Algorithms}, 22(2):139--160, 2003.
\newblock \href {https://doi.org/10.1002/rsa.10068} {\path{doi:10.1002/rsa.10068}}.

\bibitem[JMRW22]{JMRW22}
Fernando~Granha Jeronimo, Tushant Mittal, Sourya Roy, and Avi Wigderson.
\newblock Almost {{Ramanujan Expanders}} from {{Arbitrary Expanders}} via {{Operator Amplification}}.
\newblock In {\em Proceedings of the 63rd IEEE Symposium on Foundations of Computer Science}, 2022.
\newblock \href {https://arxiv.org/abs/2209.07024} {\path{arXiv:2209.07024}}, \href {https://doi.org/10.1109/FOCS54457.2022.00043} {\path{doi:10.1109/FOCS54457.2022.00043}}.

\bibitem[JNV{\etalchar{+}}21]{MIP21}
Zhengfeng Ji, Anand Natarajan, Thomas Vidick, John Wright, and Henry Yuen.
\newblock {MIP*} = {RE}.
\newblock {\em Commun. ACM}, 64(11):131--138, 2021.
\newblock \href {https://doi.org/10.1145/3485628} {\path{doi:10.1145/3485628}}.

\bibitem[Kas07]{Kas07}
Martin Kassabov.
\newblock Symmetric groups and expander graphs.
\newblock {\em Inventiones mathematicae}, 170(2):327--354, November 2007.
\newblock \href {https://doi.org/10.1007/s00222-007-0065-y} {\path{doi:10.1007/s00222-007-0065-y}}.

\bibitem[Kiw03]{Kiwi03}
M.~Kiwi.
\newblock Algebraic testing and weight distributions of codes.
\newblock {\em Theoretical Computer Science}, 299(1):81--106, 2003.
\newblock \href {https://doi.org/10.1016/S0304-3975(02)00816-2} {\path{doi:10.1016/S0304-3975(02)00816-2}}.

\bibitem[KM93]{KM93}
Eyal Kushilevitz and Yishay Mansour.
\newblock Learning decision trees using the fourier spectrum.
\newblock {\em SIAM Journal on Computing}, 22(6):1331--1348, 1993.
\newblock \href {https://doi.org/10.1137/0222080} {\path{doi:10.1137/0222080}}.

\bibitem[KM23]{KM23}
Z.~Kelley and R.~Meka.
\newblock Strong bounds for 3-progressions.
\newblock In {\em Proceedings of the 64th IEEE Symposium on Foundations of Computer Science}, pages 933--973, 2023.
\newblock \href {https://arxiv.org/abs/2302.05537} {\path{arXiv:2302.05537}}, \href {https://doi.org/10.1109/FOCS57990.2023.00059} {\path{doi:10.1109/FOCS57990.2023.00059}}.

\bibitem[KN06]{KN06}
Martin Kassabov and Nikolay Nikolov.
\newblock Universal lattices and {{Property}} $\tau$.
\newblock {\em Inventiones mathematicae}, 165(1):209--224, July 2006.
\newblock \href {https://arxiv.org/abs/math/0502112} {\path{arXiv:math/0502112}}, \href {https://doi.org/10.1007/s00222-005-0498-0} {\path{doi:10.1007/s00222-005-0498-0}}.

\bibitem[Lub11]{Lub11}
Alexander Lubotzky.
\newblock Finite simple groups of {{Lie}} type as expanders.
\newblock {\em Journal of the European Mathematical Society}, pages 1331--1341, 2011.
\newblock \href {https://doi.org/10.4171/JEMS/282} {\path{doi:10.4171/JEMS/282}}.

\bibitem[MR15]{MR15}
Cristopher Moore and Alexander Russell.
\newblock Approximate representations, approximate homomorphisms, and low-dimensional embeddings of groups.
\newblock {\em SIAM Journal on Discrete Mathematics}, 29(1):182--197, 2015.
\newblock \href {https://arxiv.org/abs/1009.6230} {\path{arXiv:1009.6230}}, \href {https://doi.org/10.1137/140958578} {\path{doi:10.1137/140958578}}.

\bibitem[NN90]{NN90}
J.~Naor and M.~Naor.
\newblock Small-bias probability spaces: efficient constructions and applications.
\newblock In {\em Proceedings of the 22nd ACM Symposium on Theory of Computing}, pages 213--223, 1990.
\newblock \href {https://doi.org/10.1137/0222053} {\path{doi:10.1137/0222053}}.

\bibitem[NN93]{NN93}
J.~Naor and M.~Naor.
\newblock Small-bias probability spaces: efficient constructions and applications.
\newblock {\em SIAM Journal on Computing}, 22(4):838--856, 1993.
\newblock \href {https://doi.org/10.1137/0222053} {\path{doi:10.1137/0222053}}.

\bibitem[NV17]{NV17}
Anand Natarajan and Thomas Vidick.
\newblock A quantum linearity test for robustly verifying entanglement.
\newblock In {\em Proceedings of the 49th ACM Symposium on Theory of Computing}, 2017.
\newblock \href {https://doi.org/10.1145/3055399.3055468} {\path{doi:10.1145/3055399.3055468}}.

\bibitem[OSP23]{OSP23}
Ryan O'Donnell, Rocco~A. Servedio, and Pedro Paredes.
\newblock Explicit orthogonal and unitary designs.
\newblock In {\em Proceedings of the 64th IEEE Symposium on Foundations of Computer Science}, 2023.
\newblock \href {https://doi.org/10.1109/FOCS57990.2023.00073} {\path{doi:10.1109/FOCS57990.2023.00073}}.

\bibitem[Sam07]{Sam07}
A.~Samorodnitsky.
\newblock Low-degree tests at large distances.
\newblock In {\em Proceedings of the 39th ACM Symposium on Theory of Computing}, pages 506--515, 2007.

\bibitem[San11]{San11}
Tom Sanders.
\newblock A {{Quantitative Version}} of the {{Non-Abelian Idempotent Theorem}}.
\newblock {\em Geometric and Functional Analysis}, 21(1):141--221, February 2011.
\newblock \href {https://doi.org/10.1007/s00039-010-0107-2} {\path{doi:10.1007/s00039-010-0107-2}}.

\bibitem[San12]{Sanders10}
Tom Sanders.
\newblock On the {B}ogolyubov-{R}uzsa lemma.
\newblock {\em Analysis \& PDE}, 5(3), 2012.
\newblock \href {https://arxiv.org/abs/1011.0107} {\path{arXiv:1011.0107}}, \href {https://doi.org/10.2140/apde.2012.5.627} {\path{doi:10.2140/apde.2012.5.627}}.

\bibitem[San21]{Sanders21}
Tom Sanders.
\newblock Coset decision trees and the {F}ourier algebra.
\newblock {\em {J}ournal d'{A}nalyse {M}ath\'ematique}, 144(1):227--259, December 2021.
\newblock \href {https://doi.org/10.1007/s11854-021-0179-y} {\path{doi:10.1007/s11854-021-0179-y}}.

\bibitem[ST00]{ST00}
Alex Samorodnitsky and Luca Trevisan.
\newblock A {PCP} characterization of {NP} with optimal amortized query complexity.
\newblock In {\em Proceedings of the 32nd ACM Symposium on Theory of Computing}, 2000.
\newblock \href {https://doi.org/10.1145/335305.335329} {\path{doi:10.1145/335305.335329}}.

\bibitem[STV17]{STV17}
Amir Shpilka, Avishay Tal, and Ben~Lee Volk.
\newblock On the {{Structure}} of {{Boolean Functions}} with {{Small Spectral Norm}}.
\newblock {\em Computational Complexity}, 26(1):229--273, March 2017.
\newblock \href {https://doi.org/10.1007/s00037-015-0110-y} {\path{doi:10.1007/s00037-015-0110-y}}.

\bibitem[SW04]{SW04}
Amir Shpilka and Avi Wigderson.
\newblock Derandomizing {{Homomorphism Testing}} in {{General Groups}}.
\newblock In {\em Proceedings of the 36th ACM Symposium on Theory of Computing}, page~18, 2004.
\newblock \href {https://doi.org/10.1137/S009753970444658X} {\path{doi:10.1137/S009753970444658X}}.

\bibitem[WX08]{WX08}
Avi Wigderson and David Xiao.
\newblock Derandomizing the {Ahlswede-Winter} matrix-valued {C}hernoff bound using pessimistic estimators, and applications.
\newblock {\em Theory of Computing}, 4(3):53--76, 2008.
\newblock \href {https://doi.org/10.4086/toc.2008.v004a003} {\path{doi:10.4086/toc.2008.v004a003}}.

\end{thebibliography}

 \end{document}